\providecommand{\tabularnewline}{\\}
  \theoremstyle{plain}
  \newtheorem{assumption}{\protect\assumptionname}
  \theoremstyle{plain}
  \newtheorem{lem}{\protect\lemmaname}
\theoremstyle{plain}
\newtheorem{thm}{\protect\theoremname}
  \theoremstyle{plain}
  \newtheorem{rem}{\protect\remarkname}
  \theoremstyle{plain}
  \newtheorem{prop}{\protect\propositionname}
  \providecommand{\assumptionname}{Assumption}
  \providecommand{\lemmaname}{Lemma}
  \providecommand{\propositionname}{Proposition}
  \providecommand{\remarkname}{Remark}
\providecommand{\theoremname}{Theorem}
\begin{document}

\title{Influence of intrinsic spin in the formation of singularities for
inhomogeneous effective dust space-times}

\author{Paulo Luz}
\email{paulo.luz@ist.utl.pt}
\affiliation{Centro de Matemática, Universidade do Minho, Campus de Gualtar, 4710-057
Braga, Portugal}

\affiliation{Centro de Astrofísica e Gravitação - CENTRA, Departamento de Física,
Instituto Superior Técnico - IST, Universidade de Lisboa - UL, Av. Rovisco
Pais 1, 1049-001 Lisboa, Portugal}

\author{Filipe C. Mena}
\email{fmena@math.uminho.pt}
\affiliation{Centro de Matemática, Universidade do Minho, Campus de Gualtar, 4710-057
Braga, Portugal}

\affiliation{Departamento de Matem\'atica, Instituto Superior Técnico, Universidade de Lisboa, Av. Rovisco
Pais 1, 1049-001 Lisboa, Portugal}

\author{Amir Hadi Ziaie}
\email{ah.ziaie@riaam.ac.ir}
\affiliation{Research~Institute~for~Astronomy~and~Astrophysics~of  Maragha~(RIAAM),~P.O.~Box~55134-441,~Maragha,~Iran}
\begin{abstract}
The evolution of inhomogeneous space-times composed of uncharged fermions
is studied for Szekeres metrics which have no Killing vectors, in general. Using the Einstein-Cartan theory
to include the effects of (intrinsic) matter spin in General Relativity,
the dynamics of a perfect fluid with non-null spin degrees of freedom
is considered. It is shown that, if the matter is composed by effective
dust and certain constraints on the initial data are verified, a singularity
will not form. Various special cases are discussed, such as Lemaître-Tolman-Bondi
and Bianchi $I$ space-times, where the results are further extended
or shown explicitly to be verified.
\end{abstract}
\maketitle

\section{Introduction}

It is known that the process of gravitational collapse in General
Relativity (GR) can lead to spacetime singularities, which can correspond not only to black holes but also to naked
singularities, depending on the initial data. In turn, the latter can challenge the cosmic censorship conjecture which, in this context, has been tested for various matter contents such as scalar fields~\cite{sfcoll},
perfect fluids~\cite{pfcoll} and imperfect fluids~\cite{impflcoll}.

It would then be interesting to explore alternative theories of gravity
and test whether the results obtained within GR are robust with respect
to small deviations from the theory. In particular, it is important to verify
if corrections to GR may prevent the formation of singularities. Work
along these lines has been done in alternative theories of gravity such
as $F(R)$ gravity~\cite{frcoll}, Brans-Dicke theory~\cite{BDcoll},
Lovelock~\cite{Lovecoll} and Gauss-Bonnet gravity~\cite{GBcoll},
where it was found that the geometrical attributes absent in
GR could affect the final fate of collapse.

In this paper, we consider the formalism provided by the Einstein-Cartan
theory (ECT), where it is possible to introduce (intrinsic) spin degrees
of freedom in a geometric theory of gravity. Such model is specially
interesting because it encapsulates such quantum corrections in a
semi-classical limit \cite{Hehl1,Hehl2}, providing a way to infer
if quantum effects may avoid the formation of singularities.

Indeed, in the framework of loop quantum gravity, it was found that
for a universe permeated by a scalar field, quantum corrections modify
the classical Friedmann equation and a cosmological singularity can
be avoided \cite{ashsingh,quantumbounce}. Moreover, some interesting
but somewhat forgotten results from the 1970s \cite{Trautman,Stewart,Kop}
suggest that the introduction of spin in spatially homogeneous anisotropic
models may prevent the formation of singularities, depending on the
degree of isotropy of the collapsing space-time. However, in ref. \cite{Hehl1}, by applying the strong energy conditions to an effective energy-momentum tensor,
 it was shown that modified versions of the Hawking-Penrose singularity theorems hold, even considering the effects of spin.


More recently, the collapse process of a spatially homogeneous and isotropic spin fluid has
been studied in \cite{astrosptor}, where it was shown that there exists a competition between
fluid and spin source parameters, so that the collapse end state is
determined via the dynamics of these source terms and both
singular and nonsingular solutions could be obtained. Moreover, the
effects of spin have been shown to come into play in the cosmological
context, where the torsion of space-time generated by spinning particles
induces gravitational repulsion in fermionic matter in the early universe,
where extremely high densities are present avoiding then the formation
of space-time singularities \cite{Magueijo}.

Nonetheless, spatially homogeneous and isotropic models may be seen as simplistic,
so it would be interesting to explore this debate further by considering
inhomogeneous and anisotropic space-times. Towards this goal, in this paper,
we take a natural first step by considering effective inhomogeneous dust equations
of state. This is surely a critical case which, yet, should
be explored by comparison with the GR case, known to develop naked
singularities.

We show that, with respect to the previous results, the situation
can be more subtle: the effects of spin contribute, in a way, as a
repulsive potential, therefore, collapsing solutions might not even
exist for inhomegeneous space-times. Furthermore, even if some form of the energy conditions is
verified, singularities may not be the end result of gravitational
collapse if spin effects are considered.

The article is organized as follows: in Sec.~\ref{sec:Inclusion_of_spin}
we include the effects of spin in the ECT and show that, in the considered
setup, the resultant model is equivalent to an effective perfect fluid
in the theory of GR; in Sec.~\ref{sec:Spin_effects_in_the_evolution_of_the_collapse}
we consider the case of an effective dust fluid in a Szekeres space-time
and study how the (intrinsic) spin degrees of freedom affect the evolution
of the space-time and the formation of singularities; in Sec.~\ref{sec:Special_Cases}
we consider some particular solutions of the Szekeres model where
previous results can be extended or shown to be verified explicitly;
in Sec.~\ref{sec:Conclusion} we summarize the results and conclude.

In this article all quantities are expressed in the geometrized unit
system where $\mathcal{G}=c=1$, greek indices range from $0$ to
$3$ and we use the metric signature $\left(-+++\right)$.

\section{Inclusion of spin effects in General Relativity\label{sec:Inclusion_of_spin}}

\subsection{The Einstein-Cartan theory in brief}

To include the spin effects in GR we shall follow the semi-classical
approach provided by ECT and relate the spin of the matter fields
with the space-time torsion tensor field. Now, let us briefly review
the ECT and see how spin effects can be included in GR (for more details
see e.g. \cite{Hehl1,Hehl2}).

We start by recalling that, in this context, the covariant derivative
of a vector field $U$ is given by 
\begin{equation}
\nabla_{\alpha}U^{\beta}=\partial_{\alpha}U^{\beta}+C_{\alpha\sigma}^{\beta}U^{\sigma}\,,\label{eq:Covariant_definition}
\end{equation}
constrained to be metric compatible, $\nabla_{\alpha}g_{\beta\gamma}=0$,
but otherwise with a generic connection $C_{\alpha\beta}^{\gamma}$.
The anti-symmetric part of the connection defines a tensor field,
called the \emph{torsion tensor field}, 
\begin{equation}
S_{\alpha\beta}\,^{\gamma}\equiv C_{\left[\alpha\beta\right]}^{\gamma}=\frac{1}{2}\left(C_{\alpha\beta}^{\gamma}-C_{\beta\alpha}^{\gamma}\right)\,.\label{eq:Torsion_tensor}
\end{equation}
From this definition, it is possible to write the connection $C_{\alpha\beta}^{\gamma}$
as a combination of the torsion tensor plus the usual metric Christoffel
symbols $\Gamma_{\alpha\beta}^{\gamma}$, such that $C_{\alpha\beta}^{\gamma}=\Gamma_{\alpha\beta}^{\gamma}+S_{\alpha\beta}\,^{\gamma}+S^{\gamma}\,_{\alpha\beta}-S_{\beta}\,^{\gamma}\,_{\alpha}$,
or 
\begin{equation}
C_{\alpha\beta}^{\gamma}=\Gamma_{\alpha\beta}^{\gamma}+K_{\alpha\beta}{}^{\gamma}\,,\label{eq:Connection_general}
\end{equation}
where the tensor field, 
\begin{equation}
K_{\alpha\beta}\,^{\gamma}\equiv S_{\alpha\beta}{}^{\gamma}+S^{\gamma}{}_{\alpha\beta}-S_{\beta}{}^{\gamma}{}_{\alpha}\,,\label{eq:Contorsion_general}
\end{equation}
is called \emph{contorsion tensor}.

Now, from the definition of the Riemann tensor 
\begin{equation}
R_{\alpha\beta\gamma}{}^{\rho}=\partial_{\beta}C_{\alpha\gamma}^{\rho}-\partial_{\alpha}C_{\beta\gamma}^{\rho}+C_{\beta\sigma}^{\rho}C_{\alpha\gamma}^{\sigma}-C_{\alpha\sigma}^{\rho}C_{\beta\gamma}^{\sigma}\,,
\end{equation}
and Eq.~\eqref{eq:Connection_general}, we can write the Riemann
tensor in ECT in terms of the metric Riemann tensor, $\widetilde{R}_{\alpha\beta\gamma}{}^{\rho}$,
defined just in terms of the Christoffel symbols, and the contorsion
tensor as 
\begin{equation}
R_{\alpha\beta\gamma}{}^{\rho}=\widetilde{R}_{\alpha\beta\gamma}{}^{\rho}+2\widetilde{\nabla}_{\left[\beta\right.}K_{\left.\alpha\right]\gamma}{}^{\rho}+2K_{\left[\beta\right|\sigma}{}^{\rho}K_{\left|\alpha\right]\gamma}{}^{\sigma}\,,\label{eq:Riemann_tensor}
\end{equation}
where the square brackets denote anti-symmetrization in the indicated
indexes and $\widetilde{\nabla}_{\alpha}$ represents the covariant
derivative of a tensor quantity using only the metric connection $\Gamma_{\alpha\beta}^{\gamma}$.
Contracting Eq.~\eqref{eq:Riemann_tensor} gives the following expression
for the Ricci tensor 
\begin{equation}
R_{\alpha\gamma}=\widetilde{R}_{\alpha\gamma}+2\widetilde{\nabla}_{\left[\beta\right.}K_{\left.\alpha\right]\gamma}{}^{\beta}+2K_{\left[\beta\right|\sigma}{}^{\beta}K_{\left|\alpha\right]\gamma}{}^{\sigma}\,,\label{eq:Ricci_tensor}
\end{equation}
whereby the action for ECT which is linear with respect to curvature
scalar can be written in the form 
\begin{equation}
A=\frac{1}{16\pi}\int\sqrt{-g}\,g^{\alpha\gamma}\left\{ \widetilde{R}_{\alpha\gamma}+K_{\beta\sigma}{}^{\beta}K_{\alpha\gamma}{}^{\sigma}-K_{\alpha\sigma}{}^{\beta}K_{\beta\gamma}{}^{\sigma}\right\} \,d^{4}x+\int\sqrt{-g}\,\mathcal{L}_{\text{Matter}}\,d^{4}x\,,\label{eq:Action}
\end{equation}
where $\mathcal{L}_{\text{Matter}}$ represents the matter Lagrangian
density and we have omitted the total derivative terms. Following
the Palatini approach, where the action is varied independently with
respect to the space-time metric and the connection, we find two sets
of field equations. Making the variation of the action in Eq.~\eqref{eq:Action}
with respect to the contorsion tensor we find 
\begin{equation}
S^{\alpha\beta\mu}+2g^{\mu[\alpha}S^{\beta]}=-8\pi\Delta^{\alpha\beta\mu}\,,\label{eq:torsion_FE_general}
\end{equation}
where $S_{\alpha}\equiv S_{\mu\alpha}{}^{\alpha}$ and the quantity
\begin{equation}
\Delta^{\alpha\beta\mu}=\frac{1}{\sqrt{-g}}\frac{\delta\left(\sqrt{-g}\mathcal{L}_{\text{Matter}}\right)}{\delta K_{\mu\beta\alpha}}\,,\label{eq:hypermomentum_general}
\end{equation}
is usually called the \emph{intrinsic hypermomentum}, as it encapsulates
all the information of the microscopic structure of the matter that
composes the fluid, i.e. intrinsic spin, dilaton charge and intrinsic
shear. Notice that, in Eq.~\eqref{eq:torsion_FE_general}, the torsion
tensor field is not a dynamical field, in the sense that the left
hand side contains no derivatives of the torsion tensor and indeed
appears as a purely algebraic equation. As such, in vacuum, where
the matter Lagrangian density $\mathcal{L}_{\text{Matter}}$ is null,
from Eq.~\eqref{eq:torsion_FE_general}, we see that the torsion
tensor is effectively null so, all solutions of GR in vacuum are also
solutions of ECT \cite{Hehl2}.

On the other hand, making the variation of the action in Eq.~\eqref{eq:Action}
with respect to the metric tensor we find the modified Einstein field
equations 
\begin{equation}
\begin{aligned}\widetilde{G}_{\mu\nu} & -4S_{\sigma}K_{\mu\nu}{}^{\sigma}-2g^{\beta\delta}K_{\mu\sigma\delta}K_{\beta\nu}{}^{\sigma}+2g_{\mu\nu}S_{\sigma}S^{\sigma}+\\
 & +\frac{1}{2}g_{\mu\nu}K_{\rho\sigma\delta}K^{\delta\rho\sigma}-4S_{\mu}S_{\nu}-g^{\sigma\rho}K_{\beta\rho\nu}K_{\sigma\mu}{}^{\beta}=8\pi\,T_{\mu\nu}\,,
\end{aligned}
\label{eq:EFE_general_torsion}
\end{equation}
where $\widetilde{G}_{\mu\nu}\equiv\widetilde{R}_{\mu\nu}-\frac{1}{2}g_{\mu\nu}\widetilde{R}$
represents the Einstein tensor defined just using the metric connection,
and 
\begin{equation}
T_{\mu\nu}=g_{\mu\nu}\mathcal{L_{\text{Matter}}}-2\frac{\delta\mathcal{L}_{\text{Matter}}}{\delta g^{\mu\nu}}\,,\label{eq:energy-momentum_tensor_general}
\end{equation}
is the (symmetric) energy-momentum tensor of the matter fluid.

\subsection{The Weyssenhoff fluid}

To relate the torsion tensor field with the intrinsic spin of matter
particles, we shall consider a perfect fluid with non-null (intrinsic)
spin degrees of freedom that permeates a region of space-time. A description
of such fluid is given by the Weyssenhoff model \cite{Weyssenhoff}.

Although the Weyssenhoff fluid was first proposed as a phenomenological
theory, a variational theory of an ideal spinning fluid has been developed
\cite{Smalley,obuk} providing a Lagrangian for the Weyssenhoff semi-classical
spin fluid which can, in turn, be used to compute the corresponding
hypermomentum and energy-momentum tensors through Eqs.~\eqref{eq:hypermomentum_general}
and \eqref{eq:energy-momentum_tensor_general}. It is then found that
the Weyssenhoff fluid is characterized by the following relation between
the intrinsic hypermomentum tensor and spin \cite{Smalley,obuk}
\begin{equation}
\Delta^{\mu\nu\alpha}=\frac{1}{2}\tau^{\mu\nu}u^{\alpha}\,,\label{eq:hypermomentum_spin_relation}
\end{equation}
where $u^{\alpha}$ is the fluid's 4-velocity and $\tau^{\mu\nu}$
is an anti-symmetric tensor representing the spin density tensor which,
in turn, verifies the following constraint 
\begin{equation}
\tau^{\alpha\beta}u_{\beta}=0\,,\label{eq:Frenkel_condition}
\end{equation}
known as the \emph{Frenkel condition} \cite{Frenkel1,Frenkel2}. Substituting
Eqs.~\eqref{eq:hypermomentum_spin_relation} and \eqref{eq:Frenkel_condition}
in Eq.~\eqref{eq:torsion_FE_general}, we find that the torsion tensor
is related to the intrinsic spin of matter as 
\begin{equation}
S^{\alpha\beta\mu}=-4\pi\,\tau^{\alpha\beta}u^{\mu}\,,\label{eq:spin_torsion_relation}
\end{equation}
making it clear that, in this model, spin constitutes the effective
source of the torsion field. Eqs.~\eqref{eq:Frenkel_condition} and
\eqref{eq:spin_torsion_relation} allow us then to rewrite Eq.~\eqref{eq:EFE_general_torsion}
as 
\begin{equation}
\begin{aligned}\widetilde{G}_{\mu\nu} & +16\pi^{2}\tau_{\alpha\beta}\tau^{\alpha\beta}u_{\mu}u_{\nu}+8\pi^{2}g_{\mu\nu}\tau_{\alpha\beta}\tau^{\alpha\beta}\\
 & -32\pi^{2}\tau_{\mu\sigma}\tau_{\nu}{}^{\sigma}=8\pi\,T_{\mu\nu}\,.
\end{aligned}
\label{eq:EFE_intermediate1}
\end{equation}
Now, the spin tensor $\tau^{\mu\nu}$ and the energy-momentum tensor
$T_{\mu\nu}$ that appear in Eqs.~\eqref{eq:hypermomentum_spin_relation}
- \eqref{eq:EFE_intermediate1} refer to the spin and energy-momentum
of microscopic particles. However, we are interested in studying the
macroscopic behavior of an ideal spinning fluid. Therefore, to find
the field equations that describe a macroscopic gravitational field
due to a Weyssenhoff fluid, we have to compute a space-time average
of the tensors $\tau^{\mu\nu}$ and $T_{\mu\nu}$ over an element
of volume of the fluid, respectively $\langle\tau^{\mu\nu}\rangle$
and $\langle T_{\mu\nu}\rangle$.

Assuming that the spinning fluid is composed of microscopic particles
with randomly oriented spin, we get the averaged quantities \cite{Gasperini}
\begin{align}
\langle\tau^{\mu\nu}\rangle & =0\,,\label{eq:average_spin_tensor1}\\
\langle\tau^{\alpha\beta}\tau_{\alpha\beta}\rangle & =2s^{2}\,,\\
\langle\tau_{\mu}{}^{\alpha}\tau_{\nu\alpha}\rangle & =\frac{2}{3}\left(g_{\mu\nu}+u_{\mu}u_{\nu}\right)s^{2}\,,
\end{align}
where $s^{2}$ represents the average of the square of the spin density
of the fluid. Moreover, from the Lagrangian density given in \cite{obuk} for a zero-vorticity fluid, we find 
\begin{equation}
\left\langle T_{\mu\nu}\right\rangle =-\frac{8\pi}{3}u_{\mu}u_{\nu}s^{2}-\frac{8\pi}{3}g_{\mu\nu}s^{2}+\left(\rho+p\right)u_{\mu}u_{\nu}+p\,g_{\mu\nu}\,,\label{eq:average_energy-momentum_tensor}
\end{equation}
where $\rho$ and $p$ represent the mass-energy density and pressure
of the fluid, respectively. Substituting Eqs.~\eqref{eq:average_spin_tensor1}
- \eqref{eq:average_energy-momentum_tensor} in the average version
of Eq.~\eqref{eq:EFE_intermediate1} we find 
\begin{equation}
\widetilde{G}_{\mu\nu}=8\pi\,\left[\left(\rho+p-4\pi s^{2}\right)u_{\mu}u_{\nu}+\left(p-2\pi s^{2}\right)\,g_{\mu\nu}\right]\,.\label{eq:EFE_intermediate2}
\end{equation}
Notice that the field equations in Eq.~\eqref{eq:EFE_intermediate2},
for a zero-vorticity Weyssenhoff fluid in the Einstein-Cartan theory,
are actually equivalent to those in GR for a perfect fluid with additional
contributions from the spin to the energy-density and pressure. Therefore,
a spinning perfect fluid can be, classically, described by the theory
of GR assuming that the fluid is described by the effective energy-momentum
tensor 
\begin{equation}
T_{\mu\nu}=\left(\rho_{\text{eff}}+p_{\text{eff}}\right)u_{\mu}u_{\nu}+p_{\text{eff}}\,g_{\mu\nu}\,,\label{eq:Energy_momentum_tensor_fluid}
\end{equation}
with 
\begin{equation}
\rho_{{\rm eff}}=\rho-2\pi s^{2}\qquad\text{and}\qquad p_{{\rm eff}}=p-2\pi s^{2},\label{eq:eff_density}
\end{equation}
where we have omitted the angular brackets. Nevertheless, one should
bear in mind that all quantities are to be considered as averages.

Moreover, from now onwards, we will also omit the tilde to indicate
tensor quantities computed only using the symmetric part of the connection
since, as was shown, we can effectively treat the problem at hand
using the theory of GR, where the torsion tensor is null.

\subsection{The energy-momentum tensor for effective uncharged spinning dust}

We shall now restrict our attention to the case where the fluid's
pressure varies in such a way that it cancels the contribution coming
from spin effects, so that the whole fluid effectively behaves as
dust, i.e. 
\begin{equation}
p_{\text{eff}}=p-2\pi s^{2}=0\,.\label{eq:eff_pressure_null}
\end{equation}
so that 
\begin{equation}
T_{\text{eff}}^{\mu\nu}=\rho_{\text{eff}}u^{\mu}u^{\nu}\,,\label{eq:Energy_momentum_tensor_dust}
\end{equation}
where $\rho_{\text{eff}}$ represents the effective mass-energy density
of the matter.

Now, let us consider that the matter is composed of $N$ species of
uncharged fermionic particles with mass $m_{i}$ and spin $s_{i}=\pm\hslash/2$
and assume that the interactions between the microscopic constituents
of the fluid are negligible. Clearly, the distribution of spin and
mass are related to each other. The particle number density for each
species is given by \cite{Hehl1} 
\begin{equation}
n_{i}=\frac{\rho_{i}}{m_{i}}=\left(\frac{4s_{i}^{2}}{\hslash^{2}}\right)^{\frac{1}{2}}\,,\label{eq:number_density_spin_relation}
\end{equation}
where $\rho_{i}$ and $s_{i}^{2}$ represent, respectively, the averaged
energy density and the average of the squared spin density of a single
species, such that 
\begin{equation}
\rho_{i\text{eff}}=\rho_{i}-2\pi s_{i}^{2}\,,\qquad\text{and}\qquad\rho_{\text{eff}}=\sum_{i=1}^{N}\rho_{i\text{eff}}\,.\label{eq:eff_density_each}
\end{equation}
Using Eq.~\eqref{eq:number_density_spin_relation} in Eqs.~\eqref{eq:eff_density} and \eqref{eq:eff_density_each}
, we find that for each species 
\begin{equation}
\rho_{i\text{eff}}=\rho_{i}\left(1-\frac{\rho_{i}}{\bar{\rho}_{i}}\right)\,,\qquad p_{i\text{eff}}=p_{i}-\frac{\rho_{i}^{2}}{\bar{\rho}_{i}}\,,\label{eq:eff_density_final_polytrope}
\end{equation}
where the total pressure of the perfect fluid is formally given by
\begin{equation}
p_{\text{eff}}=\sum_{i=1}^{N}p_{i\text{eff}}\,,\label{eq:total_pressure_sum}
\end{equation}
and 
\begin{equation}
\bar{\rho}_{i}=\frac{2m_{i}^{2}}{\pi\hslash^{2}}\,\label{eq:Critical_density_general}
\end{equation}
represents a critical mass-density which sets a scale at which the
spin effects have to be taken into account \cite{Hehl1}.

In the perfect fluid approximation, the various constituents of the
matter are non-interacting. Therefore, the total pressure of the fluid
will be null if the various terms in Eq.~\eqref{eq:total_pressure_sum}
are null. Substituting Eq.~\eqref{eq:eff_density_final_polytrope}
in Eq.~\eqref{eq:eff_pressure_null}, we find that the constituents
of the fluid are characterized by an equation of state of the form
\begin{equation}
p_{i}=\frac{1}{\bar{\rho}_{i}}\rho_{i}^{2}\,,\label{eq:polytrope}
\end{equation}
that is, a polytrope of order 2.

\section{Spin effects in the gravitational evolution\label{sec:Spin_effects_in_the_evolution_of_the_collapse}}

To study the influence of spin in singularity formation, we analyze
the evolution of an uncharged effective dust fluid with non-null spin
degrees of freedom composed only of fermionic particles.

When considering the case of the gravitational collapse of a compact
object in astrophysics, we consider a model of Oppenheimer-Snyder
type having a collapsing interior given by the Szekeres metric \cite{Szekeres1,Szekeres2},
which is inhomogeneous, matched to a vacuum exterior. Examples of
such models were shown to exist in \cite{I_Brito,Tod-Mena,Bonnor}.
Whereas, when considering a cosmological model, we shall assume that
the coordinates in the Szekeres metric are globally defined and that
the universe can either initially be expanding or contracting. In
this case, part of the problem will be to figure out if, during evolution,
there can be a bounce in the collapsing phase or a turning point,
followed by recollapse, in the expanding phase.

The Szekeres space-time represents the solutions of the Einstein field equations (EFE) for a
space-time permeated by irrotational dust whose line element can be
written in the form 
\begin{equation}
ds^{2}=-d\tau^{2}+e^{\alpha\left(\tau,r,p,q\right)}dr^{2}+e^{\beta\left(\tau,r,p,q\right)}\left(dp^{2}+dq^{2}\right)\,,\label{eq:Szekeres_general_line_element}
\end{equation}
where $\alpha$ and $\beta$ are $C^{2}$-functions of the coordinates
$\tau,r,p,q$. This metric has no Killing vectors, in general \cite{Bonnor-2}. Historically, the Szekeres models are split in two
classes: one that generalizes the Lemaître-Tolman-Bondi (LTB) metrics
and another that generalizes the Kantowski-Sachs (KS) metrics. We
shall treat them separately.

\subsection{Szekeres space-times: LTB-like}

The LTB-like Szekeres models are characterized by a line element of
the form \cite{Szekeres1,Szekeres2} 
\begin{equation}
ds^{2}=-d\tau^{2}+\frac{\left(R'-\frac{RE'}{E}\right)^{2}}{\epsilon+f\left(r\right)}dr^{2}+\frac{R^{2}}{E^{2}}\left(dp^{2}+dq^{2}\right)\,,\label{eq:Szekeres_I_line_element}
\end{equation}
where the prime indicates a derivative with respect to $r$ , $\epsilon=\left\{ -1,0,1\right\} $,
$E=E\left(r,p,q\right)$ and $f\left(r\right)$ are arbitrary $C^{2}$
functions such that $f\left(r\right)+\epsilon>0$, while the function
$R=R\left(\tau,r\right)$ satisfies the Friedmann like equation

\begin{equation}
\dot{R}^{2}\left(\tau,r\right)=f\left(r\right)+\frac{2M\left(r\right)}{R\left(\tau,r\right)}\,,\label{eq:Szekeres_I_Friedmann}
\end{equation}
with the overdot denoting a derivative with respect to the time coordinate
$\tau$ and $M=M\left(r\right)$ is another arbitrary function.

Given the line element in Eq.~\eqref{eq:Szekeres_I_line_element}
and Eq.~\eqref{eq:Energy_momentum_tensor_dust} we find, from the EFE, the following relation between the
mass-energy density of the effective dust source and the functions
that describe the geometry of the space-time 
\begin{equation}
8\pi\,\rho_{\text{eff}}=2\frac{EM'-3ME'}{R^{2}\left(ER'-RE'\right)}\,.\label{eq:Szekeres_I_eff_density}
\end{equation}
This expression can be re-written as 
\begin{align}
\rho_{\text{eff}}\left(\tau,r,p,q\right) & =\frac{r^{2}\,F\left(r,p,q\right)}{R^{2}\left(\tau,r\right)W\left(\tau,r,p,q\right)}\,,\label{eq:Szekeres_I_eff_density_alt}
\end{align}
where 
\begin{equation}
r^{2}F\left(r,p,q\right)=\frac{1}{4\pi}\left[E\left(r,p,q\right)M'\left(r\right)-3M\left(r\right)E'\left(r,p,q\right)\right]\,,\label{eq:Szekeres_I_F_definition}
\end{equation}
and 
\begin{equation}
W\left(\tau,r,p,q\right)=E\left(r,p,q\right)R'\left(\tau,r\right)-R\left(\tau,r\right)E'\left(r,p,q\right)\,,\label{eq:Szekeres_I_function_W}
\end{equation}
so, $f\left(r\right)$, $M\left(r\right)$ and $E\left(r,p,q\right)$
will be free initial functions that characterize the space-time. In
turn, $E\left(r,p,q\right)$ is written in terms of another 3 free
functions (see \cite{Szekeres1,Szekeres2}), but we will not have
to use this fact in the what follows.

\subsubsection{Regularity conditions and the influence of spin in singularity formation\label{subsec:Szekeres_class_I_results}}

Since we are interested in studying the influence of spin in the formation
of curvature singularities, in addition to the previous assumptions
we shall consider some further requirements on the initial regularity
of the space-time:
\begin{assumption}
At the initial time $\tau=\tau_{0}$: \label{assu:assumption_1}\leavevmode
\begin{enumerate}
\item $R\left(\tau_{0},r\right)$ is an increasing function of the coordinate
$r$. 
\item The space-time is non-singular. 
\item For every triplet $(r,p,q)$, we assume $W\left(\tau_{0},r,p,q\right)$
and $\rho_{\text{eff}}\left(\tau_{0},r,p,q\right)$ to be non-null.
\end{enumerate}
\end{assumption}
Under these assumptions, we can set by convention 
\begin{equation}
R\left(\tau_{0},r\right)=r.\,\label{eq:coord_system_specified}
\end{equation}
When it exists, it is also useful to introduce $\tau_{c}\left(r\right)>\tau_{0}$,
defined as the value of the time coordinate at which the function
$R$ of the shell with coordinate $r$ goes to zero.

Under our assumptions, from Eqs.~\eqref{eq:Szekeres_I_eff_density_alt},
and in the coordinate system defined by Eq.~\eqref{eq:coord_system_specified},
it is straightforward to see that the function $F\left(r,p,q\right)$
must be finite and non-null. This leads us to conclude that a necessary
condition for the divergence of $\rho_{\text{eff}}\left(\tau,r,p,q\right)$
for a given $\left(r,p,q\right)$ is that either $R\left(\tau,r\right)$
or $W\left(\tau,r,p,q\right)$ go to zero at some instant of time $\tau_S>\tau_{0}$. The former is associated with a shell-focusing singularity,
whereas the latter represents the formation of a shell-crossing singularity
\cite{Szekeres}.

Moreover, since the functions $R\left(\tau,r\right)$ and $W\left(\tau,r,p,q\right)$
are continuous in the time coordinate $\tau$ before the singularity
formation, for each triad $\left(r,p,q\right)$, the effective mass-energy
density function $\rho_{\text{eff}}\left(\tau,r,p,q\right)$ is also
a continuous function in the coordinate $\tau$. We are then able
to prove 
\begin{lem}
\label{Lemma:Lemma_1} Under our assumptions if, in a continuous gravitational
collapse, there exists a curve $\tau=\tau_{s}(r)>\tau_{0}$ such that
for each $(r,p,q)$ within the matter, either $\lim_{\tau\to\tau_{s}(r)^{-}}R\left(\tau,r\right)=0$
or $\lim_{\tau\to\tau_{s}(r)^{-}}W\left(\tau,r,p,q\right)=0$ then,
$\lim_{\tau\to\tau_{s}(r)^{-}}\rho_{\text{eff}}\left(\tau,r,p,q\right)=-\infty$. 
\end{lem}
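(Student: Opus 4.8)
The plan is to combine two facts of opposite origin: a purely geometric statement that the magnitude of $\rho_{\text{eff}}$ blows up as the denominator of Eq.~\eqref{eq:Szekeres_I_eff_density_alt} collapses, and a matter-theoretic statement that $\rho_{\text{eff}}$ is bounded \emph{above} by a fixed constant. The second is what pins the sign of the divergence and is the genuinely new ingredient relative to the pure-GR dust case; the rest is soft analysis.

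First I would record the regularity holding strictly before $\tau_s(r)$. Since the collapse is continuous and non-singular on $\tau\in[\tau_0,\tau_s(r))$, the density $\rho_{\text{eff}}(\tau,r,p,q)$ is finite there; because $F(r,p,q)$ is finite and non-null (as already deduced from Assumption~\ref{assu:assumption_1} in the gauge $R(\tau_0,r)=r$, where $\rho_{\text{eff}}(\tau_0)=F/W(\tau_0)$), Eq.~\eqref{eq:Szekeres_I_eff_density_alt} forces the product $R^2W$ to be finite and non-zero on that interval. Hence $R\neq 0$ and $W\neq 0$ there, and $\rho_{\text{eff}}$ is continuous and nowhere vanishing on $[\tau_0,\tau_s(r))$.

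Next comes the divergence of the magnitude. The numerator $r^2F(r,p,q)$ does not depend on $\tau$, so as $\tau\to\tau_s(r)^-$ the hypothesis that $R\to 0$ or $W\to 0$ drives $R^2W\to 0$, whence $|\rho_{\text{eff}}|=|r^2F|/|R^2W|\to+\infty$. The point needing care — and the main obstacle — is to rule out a competing $0\cdot\infty$ indeterminacy: one uses the regularity on the approach to guarantee that whichever of $R,W$ does not vanish stays bounded (at a shell-crossing $R\to R_\ast\neq 0$ with $W\to 0$, and at a shell-focusing $R\to 0$ with $W$ controlled), so that the product genuinely tends to zero and the magnitude genuinely diverges.

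Finally I would invoke the ceiling coming from the spin physics, which fixes the sign. Writing $\rho_{\text{eff}}=\sum_{i=1}^N\rho_{i\text{eff}}$ with $\rho_{i\text{eff}}=\rho_i\left(1-\rho_i/\bar{\rho}_i\right)$ from Eq.~\eqref{eq:eff_density_final_polytrope}, each summand is a downward parabola in the physical density $\rho_i\geq 0$, with maximum value $\bar{\rho}_i/4$ attained at $\rho_i=\bar{\rho}_i/2$; since $N$ and the critical densities $\bar{\rho}_i=2m_i^2/(\pi\hslash^2)$ are fixed, one gets $\rho_{\text{eff}}\leq\sum_{i=1}^N\bar{\rho}_i/4=:C<\infty$ uniformly in $\tau$. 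Combining the two facts then closes the argument with no further continuity input: given any $M>C$, the divergence $|\rho_{\text{eff}}|\to\infty$ forces $|\rho_{\text{eff}}|>M$ eventually, while $\rho_{\text{eff}}\leq C<M$ excludes the positive branch, so eventually $\rho_{\text{eff}}<-M$; as $M$ is arbitrary this is exactly $\lim_{\tau\to\tau_s(r)^-}\rho_{\text{eff}}=-\infty$. I would stress at the end that dropping the spin correction removes the upper bound $C$ and flips the conclusion to $+\infty$, recovering the usual GR dust behaviour and isolating the spin term as the mechanism responsible for the sign change.
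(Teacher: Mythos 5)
Your overall architecture matches the paper's proof: first show $|\rho_{\text{eff}}|\to\infty$ from Eq.~\eqref{eq:Szekeres_I_eff_density_alt}, then pin the sign to $-\infty$ using the fact that each $\rho_{i\text{eff}}=\rho_i\left(1-\rho_i/\bar{\rho}_i\right)$ is a downward parabola, bounded above, so a real density profile can only diverge negatively. Your second part is sound (and your uniform ceiling $C=\sum_i\bar{\rho}_i/4$ is a slightly cleaner quantitative version of the paper's argument). The problem is in the first part, precisely at the point you yourself flag as ``the main obstacle.''

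At a shell-focusing limit you claim that ``regularity on the approach'' guarantees $W$ stays controlled (bounded) as $R\to0$, so that $R^2W\to0$. This is not a soft-analysis fact, and it is generically false: $W=ER'-RE'$ can and typically does blow up at a shell focus. Take the marginally bound case $f=0$, Eq.~\eqref{Appendix_eq:Friedman_solution_5}, where $R=\left(9M/2\right)^{1/3}\left(\tau_c(r)-\tau\right)^{2/3}$; then $R'$ contains the term $\tfrac{2}{3}\left(9M/2\right)^{1/3}\left(\tau_c-\tau\right)^{-1/3}\tau_c'(r)$, which diverges whenever the singularity time is not simultaneous ($\tau_c'\neq0$), so $W\to\infty$ while $R\to0$ --- exactly the $0\cdot\infty$ indeterminacy you needed to exclude. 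The paper resolves this not by boundedness of $W$ but by a dynamical input: using the explicit solutions of the generalized Friedmann equation \eqref{eq:Szekeres_I_Friedmann} (its Appendix \ref{sec:Appendix-A}) it shows that $\lim_{\tau\to\tau_c(r)}R^2R'=0$ in every admissible branch (in the example above, $R^2R'\sim\left(\tau_c-\tau\right)^{4/3}\cdot\left(\tau_c-\tau\right)^{-1/3}=\tau_c-\tau\to0$ despite $R'\to\infty$), whence $R^2W=ER^2R'-R^3E'\to0$ and the magnitude genuinely diverges. Without this case-by-case computation (or an equivalent estimate derived from Eq.~\eqref{eq:Szekeres_I_Friedmann}), your step asserting $R^2W\to0$ at a shell focus is unsupported, and the lemma's conclusion does not follow from continuity and initial regularity alone.
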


\begin{proof}
We split the proof in two parts:

1. First, we prove that for each triad $\left(r,p,q\right)$, if either
$\lim_{\tau\to\tau_{s}(r)^{-}}R\left(\tau,r\right)\to0$ or $\lim_{\tau\to\tau_{s}(r)^{-}}W\left(\tau,r,p,q\right)\to0$
for some instant $\tau_{s}>\tau_{0}$, then we have that $\lim_{\tau\to\tau_{s}(r)^{-}}\rho_{\text{eff}}\left(\tau,r,p,q\right)\to\pm\infty$.
As was discussed previously, a necessary condition for divergent $\rho_{\text{eff}}$
is that either $W\left(\tau,r,p,q\right)$ or $R\left(\tau,r\right)$
go to zero at some instant $\tau=\tau_{s}$. In appendix \ref{sec:Appendix-A}
it is shown that if $R$ is a real function, these conditions are
also sufficient for the divergence of $\rho_{\text{eff}}$, in particular,
even in the case where $\lim_{\tau\to\tau_{s}(r)^{-}}W\to\infty$,
the limit $\lim_{\tau\to\tau_{s}(r)^{-}}R^{2}R'\to0$ is always verified,
hence from \eqref{eq:Szekeres_I_function_W} $\lim_{\tau\to\tau_{s}(r)^{-}}R^{2}W\to0$
and from \eqref{eq:Szekeres_I_eff_density_alt} $\lim_{\tau\to\tau_{s}(r)^{-}}\rho_{\text{eff}}\left(\tau,r,p,q\right)\to\pm\infty$,
as $F(r,p,q)$ is finite. Therefore, since $F\left(r,p,q\right)$
takes finite values, from Eq.~\eqref{eq:Szekeres_I_eff_density_alt},
fixing $(r,p,q)$, if either $\lim_{\tau\to\tau_{s}(r)^{-}}R\left(\tau,r\right)\to0$
or $\lim_{\tau\to\tau_{s}(r)^{-}}W\left(\tau,r,p,q\right)\to0$ for
some instant $\tau_{s}>\tau_{0}$, then we have $\lim_{\tau\to\tau_{s}(r)^{-}}\rho_{\text{eff}}\left(\tau,r,p,q\right)\to\pm\infty$.

2. We now show that we can only have $\rho_{\text{eff}}\left(\tau,r,p,q\right)\to-\infty$.
For a dust spacetime composed only of fermions, the effective mass-energy
density is given by the sum in Eq.~\eqref{eq:eff_density_each} and
it takes infinite values if, at least, one term of the sum, $\rho_{i\text{eff}}$,
is infinite. Although we shall restrain ourselves from imposing any
of the usual energy conditions, we shall consider that if any of the
parameters that characterize the fluid take complex values at any
point of the space-time, the solution is unphysical. Therefore, for
$\rho_{i}\in\mathbb{R}\setminus\{0\}$, from Eq.~\eqref{eq:eff_density_final_polytrope},
we find that each $\rho_{i\text{eff}}$ may only tend to $-\infty$,
hence, for each $(r,p,q)$, $\lim_{\tau\to\tau_{s}(r)^{-}}\rho_{\text{eff}}\left(\tau,r,p,q\right)\to-\infty$. 
\end{proof}
Now, from Lemma \ref{Lemma:Lemma_1}, we get the following result:
\begin{thm}
\label{prop:Proposition_1}Given a Szekeres space-time with line element
\eqref{eq:Szekeres_I_line_element} permeated by an uncharged perfect
fluid composed only of fermionic particles, characterized by an equation of state
such that, the fluid effectively behaves as dust, if Assumption \ref{assu:assumption_1}
is verified and $\text{sign}\left(F\left(r,p,q\right)\right)=\text{sign}\left(W\left(\tau_{0},r,p,q\right)\right)$
then, a curvature singularity will not form.
\end{thm}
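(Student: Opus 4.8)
The plan is to prove the theorem by contradiction, using Lemma~\ref{Lemma:Lemma_1} together with the sign hypothesis to show that a singularity is geometrically forbidden. Suppose, for contradiction, that a curvature singularity does form. By the discussion preceding Lemma~\ref{Lemma:Lemma_1}, the divergence of $\rho_{\text{eff}}$ requires that either $R\left(\tau,r\right)\to0$ or $W\left(\tau,r,p,q\right)\to0$ at some $\tau_{s}(r)>\tau_{0}$. Lemma~\ref{Lemma:Lemma_1} then forces $\lim_{\tau\to\tau_{s}(r)^{-}}\rho_{\text{eff}}\left(\tau,r,p,q\right)=-\infty$. The strategy is to show that this sign of the divergence is incompatible with the sign hypothesis on $F$ and $W$, so no such $\tau_{s}(r)$ can exist within the matter.

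The key analytic object is the expression \eqref{eq:Szekeres_I_eff_density_alt}, $\rho_{\text{eff}}=r^{2}F/(R^{2}W)$. First I would note that, by Assumption~\ref{assu:assumption_1} and the discussion following it, $F\left(r,p,q\right)$ is finite and non-null, while $R^{2}>0$ away from the focusing point. Hence the sign of $\rho_{\text{eff}}$ is controlled by $\text{sign}(F)/\text{sign}(W)$. The crucial continuity input is that, before any singularity forms, both $R\left(\tau,r\right)$ and $W\left(\tau,r,p,q\right)$ vary continuously in $\tau$, so $\rho_{\text{eff}}$ is continuous in $\tau$ on $[\tau_{0},\tau_{s}(r))$ and therefore cannot change sign without passing through zero. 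Since $F$ is non-null and $R^{2}$ is finite and positive, $\rho_{\text{eff}}$ could only vanish if $W\to\infty$; but such a zero cannot be crossed, only approached. The sign hypothesis $\text{sign}(F)=\text{sign}(W(\tau_{0},r,p,q))$ fixes $\rho_{\text{eff}}\left(\tau_{0},r,p,q\right)>0$ at the initial time, and by continuity and the impossibility of a sign change, $\rho_{\text{eff}}\left(\tau,r,p,q\right)>0$ for all $\tau\in[\tau_{0},\tau_{s}(r))$.

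The contradiction is then immediate: a positive continuous function on $[\tau_{0},\tau_{s}(r))$ cannot approach $-\infty$, yet Lemma~\ref{Lemma:Lemma_1} asserts precisely that $\lim_{\tau\to\tau_{s}(r)^{-}}\rho_{\text{eff}}=-\infty$ whenever $R$ or $W$ degenerates. Hence neither $R\left(\tau,r\right)$ nor $W\left(\tau,r,p,q\right)$ can tend to zero for any $(r,p,q)$ within the matter and any $\tau>\tau_{0}$, which by the necessary-condition discussion means $\rho_{\text{eff}}$ stays finite, i.e.\ no curvature singularity forms.

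I expect the main obstacle to be making rigorous the claim that $\rho_{\text{eff}}$ cannot change sign, in particular ruling out a sign flip mediated by $W$ passing through $\pm\infty$ (where $\rho_{\text{eff}}\to0$) rather than through a zero of $W$. One must argue that the vanishing of $\rho_{\text{eff}}$ at such a point would itself have to be approached continuously from one side, so the sign on either side of a blow-up of $W$ is determined by the sign of $F/W$ just before the blow-up, and this is incompatible with the divergence to $-\infty$ demanded by the lemma. A careful treatment should also confirm that the hypotheses apply uniformly across the $(p,q)$-dependence carried by $W$ and $F$, so that the sign condition, imposed at $\tau_{0}$, genuinely propagates for every triad $(r,p,q)$ in the matter region.
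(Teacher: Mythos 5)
Your proposal is correct and follows essentially the same route as the paper's own proof: a contradiction argument in which Lemma \ref{Lemma:Lemma_1} forces $\rho_{\text{eff}}\to-\infty$ at any candidate singular time, while continuity of $R$ and $W$ together with the hypothesis $\text{sign}\left(F\right)=\text{sign}\left(W\left(\tau_{0},r,p,q\right)\right)$ propagates the positive sign of $\rho_{\text{eff}}$ up to that time, yielding the contradiction. The only cosmetic difference is that the paper tracks the sign of $W$ (treating the $W\to0$ and $R\to0$ cases separately), whereas you track the sign of $\rho_{\text{eff}}$ directly and handle both degenerations at once; the substance is identical.
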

\begin{proof}
The proof follows from Lemma \ref{Lemma:Lemma_1} and the continuity
of $R$, $W$ and $\rho_{eff}$.

Consider that $\text{sign}\left(F\left(r,p,q\right)\right)=\text{sign}\left(W\left(\tau_{0},r,p,q\right)\right)$.

Let us start by showing that the function $W\left(\tau,r,p,q\right)$
will not be zero for any $\tau\in\left]\tau_{0},\tau_{c}\left(r\right)\right]$.
We argue by contradiction.

Assume that there exists, for each $r$, an instant $\tau_{1}\left(r\right)\in\left]\tau_{0},\tau_{c}\left(r\right)\right]$
such that $\lim_{\tau\to\tau_{1}^{-}}W\left(\tau_{1},r,p,q\right)=0$.
Then, from Lemma \ref{Lemma:Lemma_1}, $\lim_{\tau\to\tau_{1}^{-}}\rho_{\text{eff}}\left(\tau,r,p,q\right)\to-\infty$.
From our assumptions, $W\left(\tau,r,p,q\right)$ and $R\left(\tau,r\right)$
are continuous functions in the $\tau$ coordinate for all $\tau\in\left]\tau_{0},\tau_{1}\right[$
therefore, the effective mass-energy density, $\rho_{\text{eff}}$,
is a continuous function in the coordinate $\tau$ for all $\tau_{2}<\tau_{1}$,
concluding that $\rho_{\text{eff}}\left(\tau_{2},r,p,q\right)<0$.
On the other hand, continuity of $W\left(\tau,r,p,q\right)$ implies
that the sign of $W\left(\tau_{2},r,p,q\right)$ must be equal to
the sign of $W\left(\tau_{0},r,p,q\right)$. Therefore, since $R\left(\tau_{2},r\right)>0$
and $\text{sign}\left(F\left(r,p,q\right)\right)=\text{sign}\left(W\left(\tau_{2},r,p,q\right)\right)$,
we conclude, from Eq.~\eqref{eq:Szekeres_I_eff_density_alt}, $\rho_{\text{eff}}\left(\tau_{2},r,p,q\right)>0$,
contradicting what was shown before. Hence, $W\left(\tau,r,p,q\right)$
will not be zero for any $\tau\in\left]\tau_{0},\tau_{c}\left(r\right)\right]$.

The case when the function $R\left(\tau,r\right)$ goes to zero before
the function $W\left(\tau,r,p,q\right)$ also can not occur since,
from Eq.~\eqref{eq:Szekeres_I_eff_density_alt}, Lemma \ref{Lemma:Lemma_1}
would be violated.
\end{proof}
\begin{rem}
The condition $\text{sign}\left(F\left(r,p,q\right)\right)=\text{sign}\left(W\left(\tau_{0},r,p,q\right)\right)$
of Theorem \ref{prop:Proposition_1} is equivalent to consider that
the effective energy-momentum tensor \eqref{eq:Energy_momentum_tensor_dust}
verifies the weak energy condition at the initial time $\tau_{0}$. 
\end{rem}
\begin{rem}To clarify the statement of the Theorem \ref{prop:Proposition_1},
let us consider the simpler case of an effective dust fluid composed
of only one type of fermions. In this case, from Eqs.~\eqref{eq:eff_density_final_polytrope}
and \eqref{eq:Szekeres_I_eff_density_alt} we can solve for $\rho$,
 such that
\begin{equation}
\rho\left(r,p,q\right)=\frac{\bar{\rho}}{2}\left(1\pm\sqrt{1-\frac{4r^{2}F}{\bar{\rho}R^{2}W}}\right)\,,\label{eq:WR2}
\end{equation}
hence, given $\text{sign}\left(F\left(r,p,q\right)\right)=\text{sign}\left(W\left(\tau_{0},r,p,q\right)\right)$,
from the results of Appendix \ref{sec:Appendix-A}, if, for each $r$,
either $W\left(\tau,r,p,q\right)$ or $R\left(\tau,r\right)$ go to
zero for some instant $\tau\in\left]\tau_{0},\tau_{c}\left(r\right)\right]$,
the energy density profile for the fluid will take complex values,
hence considered unphysical.
\end{rem}

\subsection{Szekeres space-times: KS-like\label{subsec:Szekeres_class_II}}

The KS-like Szekeres models are characterized by the general line
element \cite{Szekeres1} 
\begin{equation}
ds^{2}=-d\tau^{2}+\left[X\left(\tau,r\right)-R\left(\tau\right)\frac{E_{1}\left(r,p,q\right)}{E_{0}\left(p,q\right)}\right]^{2}dr^{2}+\frac{R\left(\tau\right)^{2}}{E_{0}\left(p,q\right)^{2}}\left(dp^{2}+dq^{2}\right)\,,\label{eq:Szekeres_II_line_element}
\end{equation}
where the function $R\left(\tau\right)$ verifies 
\begin{equation}
\dot{R}\left(\tau\right)^{2}=\frac{2M}{R\left(\tau\right)}+k_{1}\,,\label{eq:Szekeres_II_Friedmann}
\end{equation}
with $M$ being an arbitrary constant to be given as initial
data and the constant $k_{1}$ is determined by 
\begin{equation}
\left(\partial_{p}E_{0}\right)^{2}+\left(\partial_{q}E_{0}\right)^{2}-E_{0}\left(\partial_{p}^{2}E_{0}+\partial_{q}^{2}E_{0}\right)=k_{1}\,.
\end{equation}
Moreover, the function $X\left(\tau,r\right)$ is a solution of following
equation: 
\begin{equation}
X\,\ddot{R}+\dot{R}\,\dot{X}+R\,\ddot{X}=\frac{k_{2}\left(r\right)}{2}\,,\label{eq:Szekeres_II_X_evolution_equation}
\end{equation}
with 
\begin{equation}
2\partial_{q}E_{0}\partial_{q}E_{1}+2\partial_{p}E_{0}\partial_{p}E_{1}-\left(\partial_{q}^{2}E_{0}+\partial_{p}^{2}E_{0}\right)E_{1}-E_{0}\left(\partial_{p}^{2}E_{1}+\partial_{q}^{2}E_{1}\right)=k_{2}(r)\,,
\end{equation}
where we have omitted some functional dependencies to avoid saturating
the notation.

From the EFE, the effective mass-energy density of the spinning cloud
verifies 
\begin{equation}
\rho_{\text{eff}}=\frac{G\left(r,p,q\right)}{R^{2}\left(\tau\right)H\left(\tau,r,p,q\right)}\,,\label{eq:Szekeres_II_energy_density}
\end{equation}
where 
\begin{equation}
H\left(\tau,r,p,q\right)=X\left(\tau,r\right)\,E_{0}\left(p,q\right)-R\left(\tau\right)\,E_{1}\left(r,p,q\right)\,.\label{eq:Szekeres_II_function_W}
\end{equation}
Such space-times are completely determined by the initial functions
$E_{0}\left(p,q\right)$and $E_{1}\left(r,p,q\right)$ and the constant
$M$.

\subsubsection{Regularity conditions and the influence of spin in singularity formation\label{subsec:Szekeres_class_II_results}}

As in the previous section, we shall consider some further constraints
on the initial regularity of the space-time:
\begin{assumption}
\label{assu:assumption_2}\leavevmode 
\begin{enumerate}
\item At the initial time, $\tau=\tau_{0}$, the space-time is non-singular. 
\item For every triplet $(r,p,q)$, we assume $H\left(\tau_{0},r,p,q\right)$
and $\rho_{\text{eff}}\left(\tau_{0},r,p,q\right)$ to be non-null. 
\end{enumerate}
\end{assumption}
Now, given the above initial regularity constraints and looking at
Eqs.~\eqref{eq:Szekeres_II_energy_density} and \eqref{eq:Szekeres_II_function_W},
as well as to the solutions in Appendix \ref{sec:Appendix-A},
it can be seen that, following the same reasoning applied in the proofs
of subsection \ref{subsec:Szekeres_class_I_results},
we have the following result: 
\begin{prop}
\label{prop:Proposition_2}Given a Szekeres space-time with line element
\eqref{eq:Szekeres_II_line_element} filled with an uncharged perfect
fluid composed only of fermionic particles, characterized by an equation of state
such that, the fluid effectively behaves as dust, if assumptions \ref{assu:assumption_2}
are verified and $\text{sign}\left(G\left(r,p,q\right)\right)=\text{sign}\left(H\left(\tau_{0},r,p,q\right)\right)$,
then a curvature singularity will not form.
\end{prop}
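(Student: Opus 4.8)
The plan is to transcribe the argument of Theorem~\ref{prop:Proposition_1} into the KS-like setting, exploiting the fact that the density formula~\eqref{eq:Szekeres_II_energy_density} has exactly the same structure as~\eqref{eq:Szekeres_I_eff_density_alt} under the replacements $r^{2}F\to G$ and $W\to H$. First I would establish the precise analogue of Lemma~\ref{Lemma:Lemma_1}: under Assumption~\ref{assu:assumption_2}, if there is a curve $\tau=\tau_{s}(r)>\tau_{0}$ along which, for each $(r,p,q)$ within the matter, either $\lim_{\tau\to\tau_{s}(r)^{-}}R(\tau)=0$ or $\lim_{\tau\to\tau_{s}(r)^{-}}H(\tau,r,p,q)=0$, then $\lim_{\tau\to\tau_{s}(r)^{-}}\rho_{\text{eff}}(\tau,r,p,q)=-\infty$. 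The function $G(r,p,q)$ is finite and non-null, which follows from Assumption~\ref{assu:assumption_2}: evaluating~\eqref{eq:Szekeres_II_energy_density} at $\tau_{0}$ gives $G=\rho_{\text{eff}}(\tau_{0},r,p,q)\,R(\tau_{0})^{2}\,H(\tau_{0},r,p,q)$ with every factor finite and non-zero. Consequently~\eqref{eq:Szekeres_II_energy_density} can diverge only if the product $R^{2}H$ tends to zero; the necessity of this is immediate, and for the sufficiency I would invoke the explicit solutions of Appendix~\ref{sec:Appendix-A} to confirm that $R\to0$ or $H\to0$ indeed forces $R^{2}H\to0$, including the borderline case in which $H$ itself diverges. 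Granting this, $\rho_{\text{eff}}\to\pm\infty$, and the sign is then fixed by the physicality requirement: from~\eqref{eq:eff_density_final_polytrope} each $\rho_{i\text{eff}}=\rho_{i}(1-\rho_{i}/\bar{\rho}_{i})$ is, for real $\rho_{i}$, a downward parabola bounded above by $\bar{\rho}_{i}/4$ but unbounded below, so a divergent $\rho_{\text{eff}}=\sum_{i}\rho_{i\text{eff}}$ can only tend to $-\infty$.

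With this lemma in hand, the proof of Proposition~\ref{prop:Proposition_2} proceeds by contradiction in the same two steps as Theorem~\ref{prop:Proposition_1}. Assuming $\text{sign}(G)=\text{sign}(H(\tau_{0},r,p,q))$, I would first rule out a zero of $H$ on $\left]\tau_{0},\tau_{c}\right]$, where $\tau_{c}$ denotes the time at which $R$ reaches zero: if $H(\tau_{1},r,p,q)=0$ for some $\tau_{1}$ in this interval, the lemma gives $\rho_{\text{eff}}\to-\infty$ as $\tau\to\tau_{1}^{-}$, so by continuity $\rho_{\text{eff}}(\tau_{2},r,p,q)<0$ for $\tau_{2}<\tau_{1}$ sufficiently close. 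But continuity of $H$ forces $\text{sign}(H(\tau_{2}))=\text{sign}(H(\tau_{0}))=\text{sign}(G)$, and with $R^{2}>0$ equation~\eqref{eq:Szekeres_II_energy_density} then yields $\rho_{\text{eff}}(\tau_{2})>0$, a contradiction. Second, the alternative in which $R\to0$ before $H$ vanishes is excluded in the same way, since by the lemma it would again produce $\rho_{\text{eff}}\to-\infty$ in conflict with the sign condition. Hence neither factor of $R^{2}H$ can reach zero on the evolution interval, and by the discussion preceding the lemma no curvature singularity forms.

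The only genuinely new work lies in the sufficiency clause of the lemma, and this is where I expect the main obstacle. In the LTB case the analogous step was controlled algebraically through $W=ER'-RE'$, so that $R\to0$ governed $R^{2}W$ directly via $R^{2}R'\to0$. Here $H=XE_{0}-RE_{1}$ involves the field $X(\tau,r)$, which is \emph{not} algebraic in $R$ but is instead determined by the evolution equation~\eqref{eq:Szekeres_II_X_evolution_equation}. Establishing that $R^{2}H\to0$ whenever $R\to0$ therefore hinges on the near-singularity asymptotics of $X$ — specifically on verifying $R^{2}X\to0$ — and it is precisely for this that the explicit solutions collected in Appendix~\ref{sec:Appendix-A} are required. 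Once those asymptotics are secured, the remainder of the argument is a verbatim translation of subsection~\ref{subsec:Szekeres_class_I_results}.
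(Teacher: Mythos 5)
Your proposal is correct and is essentially the paper's own argument: the paper "proves" Proposition~\ref{prop:Proposition_2} precisely by asserting that the reasoning of Theorem~\ref{prop:Proposition_1} carries over under the substitutions $r^{2}F\to G$, $W\to H$, with Appendix~\ref{sec:Appendix-A} invoked for the sufficiency step, which is exactly the transcription you describe. The one obstacle you flag --- that Appendix~\ref{sec:Appendix-A} only controls $R^{2}R'$ and says nothing about the ODE-determined field $X$, so $R^{2}X\to0$ (hence $R^{2}H\to0$ as $R\to0$) is not literally established there --- is a genuine point, but it is equally a gap in the paper's own one-line treatment rather than a defect specific to your proof.
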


\section{Special Cases\label{sec:Special_Cases}}

In this section, we shall discuss the effects of spin in the formation
of singularities for some particular cases of the Szekeres space-times,
where the analysis of the previous section can be extended.

\subsection{Lemaître-Tolman-Bondi space-time}

The Lemaître-Tolman-Bondi (LTB) space-time is a solution of the EFE
for a spherically symmetric neutral dust source characterized, in
co-moving coordinates, by the line element 
\begin{equation}
ds^{2}=-d\tau^{2}+\frac{R'\left(\tau,r\right)^{2}}{1+f\left(r\right)}dr^{2}+R\left(\tau,r\right)^{2}d\Omega^{2}\,,\label{eq:LTB_line_element}
\end{equation}
where $d\Omega^{2}\equiv d\theta^{2}+\sin^{2}\theta\,d\varphi^{2}$
represents the line element of the unit 2-sphere. The function $R\left(\tau,r\right)$
verifies Eq.~\eqref{eq:Szekeres_I_Friedmann} and represents the
circumference radius at an instant $\tau$ and radial coordinate $r$
and $f\left(r\right)>-1$ is an arbitrary $C^{2}$-function of the
radial coordinate. The LTB metric can be found from the Szekeres solution,
Eq.~\eqref{eq:Szekeres_I_line_element}, by setting $E^{2}\left(r,p,q\right)=\left(1+p^{2}+q^{2}\right)/4$
and $\epsilon=1$.

From the EFE, we find that Eq.~\eqref{eq:Szekeres_I_eff_density_alt}
is re-written for the LTB space-time as 
\begin{align}
\rho_{\text{eff}}\left(\tau,r\right) & =\frac{r^{2}\,F\left(r\right)}{R^{2}R^{\prime}}\,,\label{eq:LTB_eff_density_alt}
\end{align}
where $F\left(r\right)$ is another arbitary function of $r$ and
it is related to the function $M\left(r\right)$, Eq.~\eqref{eq:Szekeres_I_Friedmann},
by $M'\left(r\right)=4\pi r^{2}\,F\left(r\right)$, or
\begin{equation}
M\left(r\right)=4\pi\int_{0}^{r}F\left(\mathrm{r}\right)\,\mathrm{r}^{2}\,d\mathrm{r}.\label{eq:LTB_M_integral_expression}
\end{equation} 
It is worth remarking that, in
this case, the function $M\left(r\right)$ represents the mass contained
within a shell with coordinate $r$ (see e.g. \cite{Landau}).

In the case of a LTB space-time, the function $W\left(\tau,r\right)\equiv R'\left(\tau,r\right)$
and in the coordinate system defined by Eq.~\eqref{eq:coord_system_specified}
we have $W\left(\tau_{0},r\right)=+1$. Therefore, for the LTB
space-time imposing $F\left(r\right)>0$ is equivalent to impose $M'\left(r\right)>0$.

The result in Theorem \ref{prop:Proposition_1}, albeit important, does not provide an answer
in the cases where $F\left(r\right)<0$. One might think that such
cases are unphysical since the effective energy conditions 
would be violated. There are, however,
known physical phenomena that seem to violate the energy conditions
(see e.g. \cite{Visser}). Moreover, cases of collapsing space-times
with negative mass have been studied several times in the past (see e.g. \cite{RMann,Bubble} and references therein).

Let us then consider that $M'\left(r\right)$ takes negative values for all $r$. In this case we have, from Eq.~\eqref{eq:LTB_M_integral_expression},
that the mass function $M\left(r\right)$ is negative for all $r$.
Assuming that $f\left(r\right)\geq-2M/r$, otherwise Eq.~\eqref{eq:Szekeres_I_Friedmann}
has no real solutions, it is clear, from Eq.~\eqref{eq:Szekeres_I_Friedmann}, that one of two scenarios will occur: either $\dot{R}\left(\tau_{0},r\right)>0$,
in which case the effective dust matter will continue to expand indefinitely;
or $\dot{R}\left(\tau_{0},r\right)<0$ where at some instant of time,
say $\tau_{b}$, $\dot{R}\left(\tau_{b},r\right)$ will go to zero.
From Eq.~\eqref{eq:Szekeres_I_Friedmann}, it is easy to show that $\ddot{R}(\tau,r)$
is always positive, therefore, at $\tau_{b}$, the function $R$ will
have a minimum. Hence, the system will bounce back and start to expand
indefinitely. 

We summarize these conclusions in the next proposition, which softens the conditions of Theorem \ref{prop:Proposition_1} for the case of LTB:
\begin{prop}
\label{prop:Proposition_3}Given a LTB space-time composed of a spherically symmetric, uncharged, collapsing perfect fluid, composed only of fermionic particles, characterized by an equation of state such
that, the fluid effectively behaves as dust, if Assumptions \ref{assu:assumption_1}
are verified and $\text{sign}\left(M'\left(r\right)\right)$ is the same for all $r$ within
the space-time, the circumferential radius function $R\left(\tau,r\right)$
will not go to zero.
\end{prop}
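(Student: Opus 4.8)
The plan is to dichotomize on the (constant) sign of $M'(r)$. The hypothesis that $\mathrm{sign}(M'(r))$ is the same for all $r$, together with Assumption~\ref{assu:assumption_1} (which in the gauge~\eqref{eq:coord_system_specified} gives $\rho_{\text{eff}}(\tau_0,r)=F(r)\neq 0$, hence $M'\neq 0$), leaves exactly the two cases $M'(r)>0$ everywhere and $M'(r)<0$ everywhere, and I would treat them in turn.

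The case $M'(r)>0$ I would reduce immediately to Theorem~\ref{prop:Proposition_1}. In the coordinate gauge~\eqref{eq:coord_system_specified} the LTB function satisfies $W(\tau_0,r)=R'(\tau_0,r)=+1$, while the relation $M'(r)=4\pi r^2 F(r)$ shows $M'(r)>0$ is equivalent to $F(r)>0$. Thus $\mathrm{sign}(F(r))=\mathrm{sign}(W(\tau_0,r))$, the hypothesis of Theorem~\ref{prop:Proposition_1} is met, and no curvature singularity can form; its proof in particular excludes $R(\tau,r)\to 0$, which is the sought conclusion.

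The case $M'(r)<0$ is the genuinely new content, and here I would argue directly from the Friedmann-like equation~\eqref{eq:Szekeres_I_Friedmann}. The integral formula~\eqref{eq:LTB_M_integral_expression} gives $M(r)<0$ for every $r$. Differentiating~\eqref{eq:Szekeres_I_Friedmann} in $\tau$ yields the acceleration relation $\ddot R=-M/R^{2}>0$, so each shell $R(\cdot,r)$ is strictly convex in $\tau$. Reality of the solutions of~\eqref{eq:Szekeres_I_Friedmann} requires $f(r)+2M(r)/R\ge 0$; evaluated at $\tau_0$, where $R=r$, this forces $f(r)\ge -2M(r)/r>0$. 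Since $f+2M/R\to -\infty$ as $R\to 0^{+}$ while $\dot R^{2}=f+2M/R\ge 0$, no shell can reach $R=0$: it is bounded below by the positive turning radius $R_{\min}=-2M/f$ at which $\dot R$ vanishes, and convexity guarantees this is a genuine minimum, after which the shell bounces and expands. For a collapsing shell, $\dot R(\tau_0,r)<0$, this is exactly the non-formation of a zero of $R$, and the same convexity argument also covers any initially expanding shell.

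The routine steps are the gauge reduction in the first case and the one-line differentiation producing $\ddot R$ in the second. The only point demanding care is the $M'<0$ branch, where I must confirm that the turning radius is strictly positive and does not exceed the starting value; both follow from $f\ge -2M/r$ at $\tau_0$, which yields $r\ge R_{\min}>0$. I do not expect a substantive obstacle beyond phrasing the bounce cleanly and checking that the reality condition on $f$ is consistent with Assumption~\ref{assu:assumption_1}.
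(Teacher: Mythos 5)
Your proposal is correct and takes essentially the same route as the paper: the case $M'(r)>0$ is reduced to Theorem~\ref{prop:Proposition_1} via $M'(r)=4\pi r^{2}F(r)$ and $W(\tau_{0},r)=+1$, while the case $M'(r)<0$ uses $M(r)<0$ from Eq.~\eqref{eq:LTB_M_integral_expression}, the reality constraint $f+2M/R\geq 0$, and $\ddot{R}=-M/R^{2}>0$ to force a bounce at a strictly positive minimum radius. Your explicit lower bound $R\geq -2M/f>0$ just makes quantitative the paper's statement that $R$ attains a nonzero minimum at the turning time $\tau_{b}$.
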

Aside the simpler case where $M'\left(r\right)<0$ for all $r$, there may be configurations where there are regions of the space-time with positive effective energy density and other regions with negative effective energy density, however, it is easy to show that such configurations are solutions of the Einstein field equations only if surface layers, separating the different regions, are present. Such cases shall not be considered here.

To conclude this subsection, we note that if $M'\left(r\right)$ takes negative
values, shell-crossing singularities may occur.

\subsubsection{The evolution of the collapse\label{subsec:LTB_evolution_of_the_collapse}}

In the previous subsection it was shown that, under our assumptions, 
and if the sign of the function $M'\left(r\right)$ is the same for all $r$,
an uncharged effective dust cloud composed only of fermionic particles,
in a LTB space-time will not form a shell-focusing singularity. Then,
for $\dot{R}(\tau_{0},r)<0$,  and assuming that no shell-crossing singularities occur for $\text{sign}\left(M'(r)\right)<0$, there are only three possibilities for
the behavior of the uncharged effective dust cloud: 
\begin{enumerate}[label=\roman*)]
\item there are no global in time solutions of the EFE for a collapsing
uncharged effective dust matter; 
\item the gravitational collapse will lead to a bounce of the matter cloud; 
\item the gravitational collapse of the matter cloud will settle in a stable
configuration. 
\end{enumerate}
As we shall see below, the third scenario will never occur, and the
first and second cases may occur depending on the initial data:

{\bf (I)} Consider
the initial data  $M\left(r\right)>0$ and $f\left(r\right)\geq0$ or $M\left(r\right)=0$ and $f\left(r\right)>0$. 
In both cases, from Eq.~\eqref{eq:Szekeres_I_Friedmann},
the function $\dot{R}$ can not go to zero, that is, for a given $r$,
$R\left(\tau,r\right)$ is either a strictly decreasing or a strictly
increasing function of $\tau$. However, if $R$ is a decreasing function
in the $\tau$ coordinate, it will eventually go to zero, violating
the result of Proposition \ref{prop:Proposition_3}. Hence, in these
cases, the only possible physical solution is when the matter 
expands indefinitely.

{\bf (II)} Another possibility is the case where $M\left(r\right)<0$ and $f\left(r\right)>0$,
with $2M\left(r\right)/R\left(\tau_{0},r\right)+f\left(r\right)>0$
for a given $r$ within the matter cloud. In this case, from Eq.~\eqref{eq:Szekeres_I_Friedmann},
there are two possible behaviors depending on the initial value of
$\dot{R}\left(\tau\right)$: either $\dot{R}\left(\tau_{0},r\right)>0$,
in which case the effective dust cloud will continue to expand indefinitely;
or $\dot{R}\left(\tau_{0},r\right)<0$, where at some instant of time,
$\tau_{b}$, $\dot{R}\left(\tau_{b},r\right)$ will go to zero. From
Eq.~\eqref{eq:Szekeres_I_Friedmann}, it is easy to show that $\ddot{R}(\tau,r)$
is always positive. Therefore, at $\tau_{b}$, the function $R$ will
attain a minimum, different from zero, concluding that the system
will bounce and start to expand from then on. Note that this solution
is characterized by $M\left(r\right)<0$, therefore the weak energy
condition is not verified and there is no inconsistency with the result in ref.~\cite{Hehl1}.

{\bf (III)} Yet another possibility is the case where $M\left(r\right)>0$ and
$f\left(r\right)<0$, with $2M\left(r\right)/R\left(\tau_{0},r\right)+f\left(r\right)>0$:
if initially $\dot{R}\left(\tau_{0},r\right)<0$, then a singularity
will eventually form; on the other hand, if $\dot{R}\left(\tau_{0},r\right)>0$,
$\dot{R}$ will go to zero at some $\tau\in]\tau_{0},+\infty[$, however,
from Eq.~\eqref{eq:Szekeres_I_Friedmann} we see that $\ddot{R}(\tau,r)$
is always negative in this case, hence, the system will cease to expand
further and start to collapse, eventually forming a singularity. In
both scenarios the result in Theorem \ref{prop:Proposition_1}
is violated, hence, such initial data does not correspond to a physical
solution. This is because $M\left(r\right)>0$ implies that at some
region $F\left(r\right)>0$, therefore, if a singularity occurs the
energy density of the constituents of the fluid would have to take
complex values.

As examples, in Figs.~\ref{fig:M_neg_f_pos} and \ref{fig:M_pos_f_neg}, we show
the behavior of $R$ as a function of $\tau$ for various fixed values
of the coordinate $r$, depending on the initial sign of $\dot{R}\left(\tau_{0},r\right)$,
found by numerically solving Eq.~\eqref{eq:Szekeres_I_Friedmann}
in particular cases when $M\left(r\right)<0$ and $f\left(r\right)>0$ and
$M\left(r\right)>0$ and $f\left(r\right)<0$, respectively.

\begin{figure}
\subfloat[\label{fig:2.a}]{\includegraphics[scale=0.6]{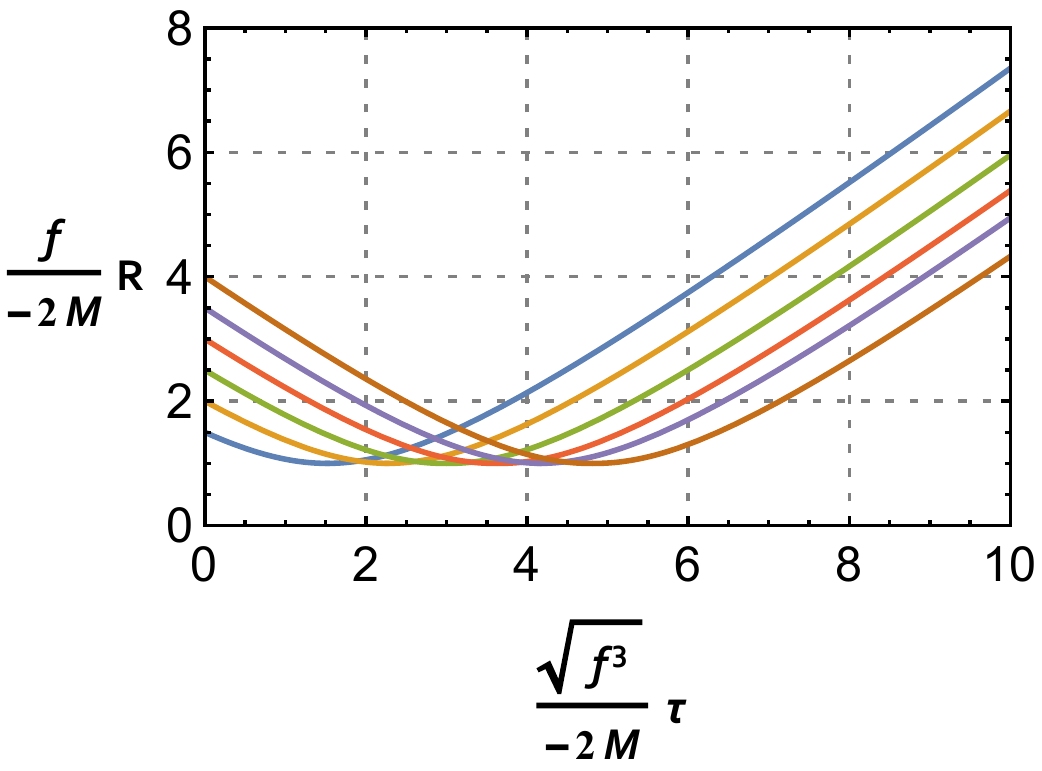}

}\hspace*{3cm}\subfloat[\label{fig:2.b}]{\includegraphics[scale=0.6]{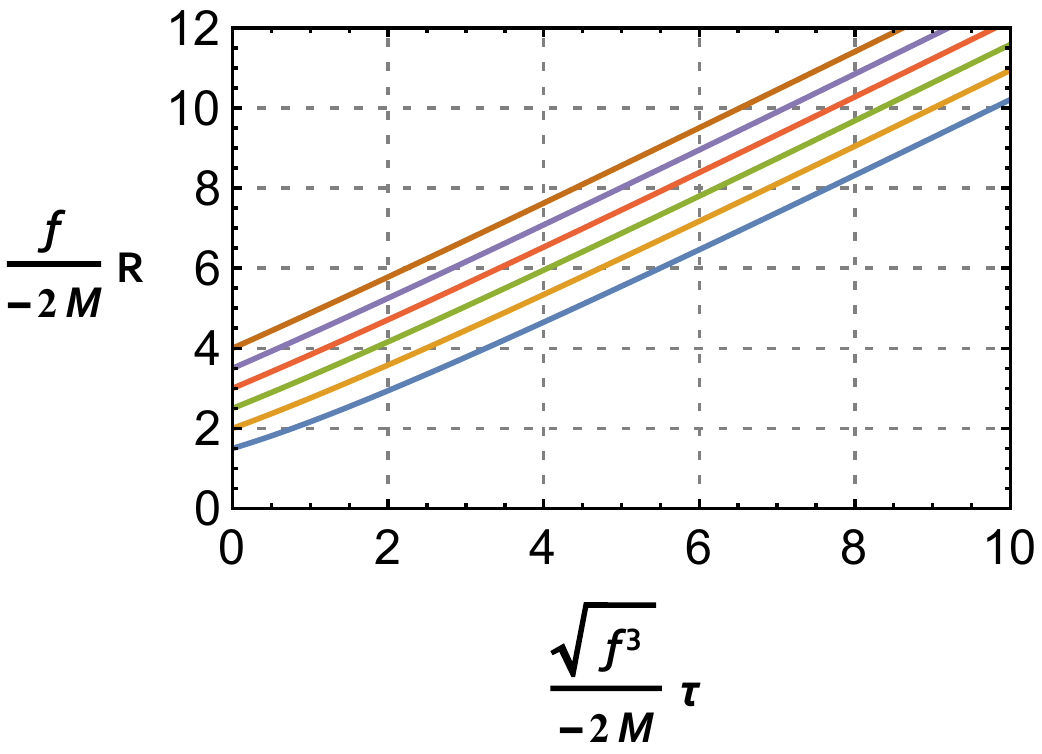}

}

\caption{\label{fig:M_neg_f_pos}Behavior of the function $R\left(\tau,r\right)$
for various fixed values of the coordinate $r$, depending on the
initial value of $\dot{R}$, in the case of $M\left(r\right)<0$ and
$f\left(r\right)>0$. Panel a) At the initial time $\dot{R}\left(\tau_{0},r\right)<0$.
Panel b) At the initial time $\dot{R}\left(\tau_{0},r\right)>0$.}
\end{figure}

\begin{figure}
\subfloat[\label{fig:3.a}]{\includegraphics[scale=0.6]{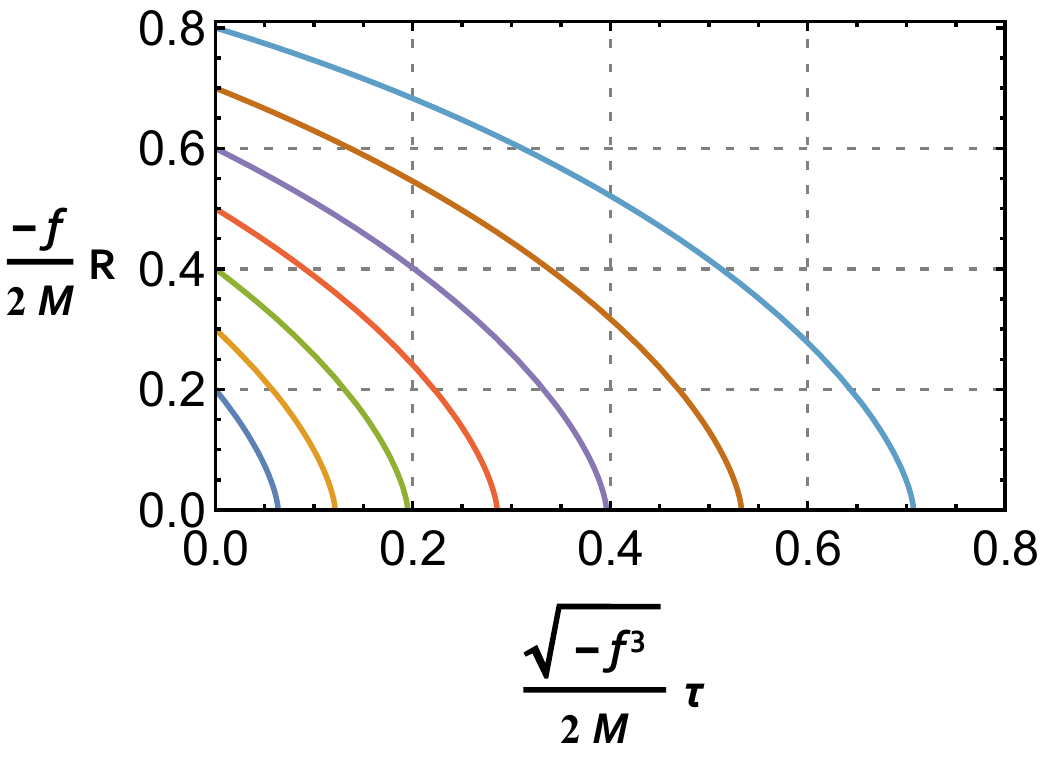}

}\hspace*{3cm}\subfloat[\label{fig:3.b}]{\includegraphics[scale=0.6]{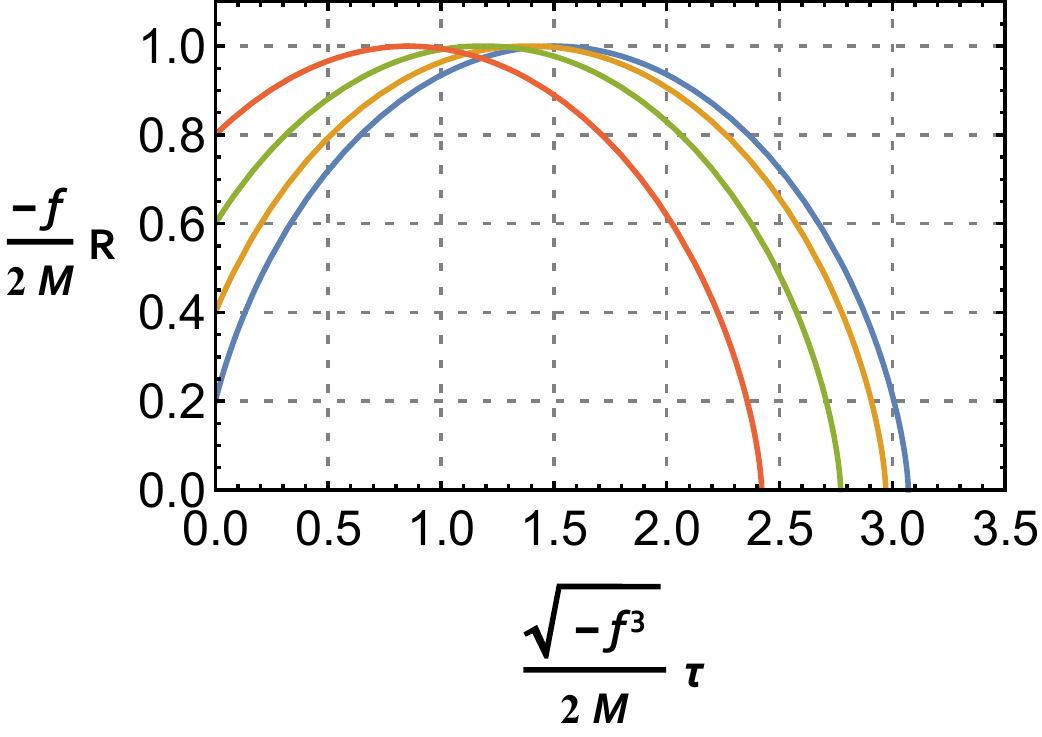}

}

\caption{\label{fig:M_pos_f_neg}Behavior of the function $R\left(\tau,r\right)$
for various values of the coordinate $r$ depending of the initial
value of $\dot{R}$ in the case of $M\left(r\right)>0$ and $f\left(r\right)<0$.
Panel a) At the initial time $\dot{R}\left(\tau_{0},r\right)<0$.
Panel b) At the initial time $\dot{R}\left(\tau_{0},r\right)>0$.}
\end{figure}

\subsection{Friedmann-Lemaître-Robertson-Walker space-time}

Another example of notable interest is the Friedmann-Lemaître-Robertson-Walker
(FLRW) model, where the space-time is parametrized by a spherically
symmetric spatially homogeneous and isotropic dust metric, such that
\begin{equation}
ds^{2}=-d\tau^{2}+a\left(\tau\right)^{2}\left(\frac{dr^{2}}{1+k\,r^{2}}+r^{2}d\Omega^{2}\right)\,,
\end{equation}
where $a\left(\tau\right)$ is the scale factor and $k=\left\{ -1,0,1\right\} $.
This solution is a particular case of the Szekeres line element, Eq.~\eqref{eq:Szekeres_I_line_element},
by setting $E^{2}\left(r,p,q\right)=\left(1+p^{2}+q^{2}\right)/4$,
$\epsilon=1$, $f\left(r\right)\equiv k\,r^{2}$ and $R\left(\tau,r\right)\equiv a\left(\tau\right)r$.
We then have for an effective dust cloud 
\begin{equation}
\rho_{\text{eff}}\left(\tau\right)=\frac{F_{0}}{a^{3}\left(\tau\right)}\,,
\end{equation}
with $F_{0}$ being a non-zero constant, and $a\left(\tau\right)$
satisfying the well-known Friedmann equation 
\begin{equation}
\dot{a}^{2}\left(\tau\right)=\frac{2M(r)}{r^{3}\,a\left(\tau\right)}+k\,,\label{scale-factor}
\end{equation}
where 
\begin{equation}
M(r)=M_{0}r^{3},\qquad\text{ with }~~~M_{0}=\frac{4\pi}{3}F_{0}=\frac{4\pi}{3}a\left(\tau_{0}\right)^{3}\rho_{\text{eff}}\left(\tau_{0}\right)\,.
\end{equation}
Moreover, this space-time is itself a particular case of the LTB space-time
studied in the previous subsection. Therefore, all the previous results
are valid for this particular case.

From Proposition \ref{prop:Proposition_3} we know that independently
of the sign of $M_{0}$, or equivalently $F_{0}$, a singularity does
not occur. As discussed in subsection \ref{subsec:LTB_evolution_of_the_collapse},
the actual evolution of space-time depends of the values of the constants
$k$ and $M_{0}$. More concretely, 
\begin{enumerate}
\item for $M_{0}>0\wedge k\geq0$, the space-time will expand indefinitely; 
\item for $M_{0}<0\wedge k>0$, depending on whether $\dot{a}\left(\tau_{0}\right)>0$
or $\dot{a}\left(\tau_{0}\right)<0$, the space-time will either expand
indefinitely or start by a collapsing phase which is followed by a
nonsingular bounce and then entering an expanding phase, respectively; 
\end{enumerate}
in all other cases, from Eq. \eqref{scale-factor}, we find that the
functions $a\left(\tau\right)$ or $\rho\left(\tau\right)$ will take
complex values, hence, the solutions to the Friedmann equations are
considered unphysical (see also Sec.6.2 of ref.~\cite{Griff_Podolsky}).

By comparison with our setup, we recall that in ref.~\cite{Trautman}, the effects of spin in singularity formation
was studied by considering a dust cloud composed of fermionic particles
in a FLRW space-time whose spins were assumed to be aligned in a given
preferred spatial direction. In that case, the field equations can
be explicitly solved and a closed form expression for the scale factor
$a\left(\tau\right)$ was found, showing explicitly that there is
a minimum positive value for $a\left(\tau\right)$ thus, concluding
that a singularity does not form. Moreover, cosmological as well as
astrophysical consequences of introducing spin and torsion in gravitation
has been studied in~\cite{cosmosptor} and~\cite{astrosptor,astrosptor1},
respectively. In the former, a closed homogeneous and isotropic universe
filled with fermionic matter has been considered and it was shown
that the effects of spin-torsion coupling produces a gravitational
repulsion in the early universe, preventing the formation of
cosmological singularity. In the latter papers, the effects of spin on the
dynamics of collapse for a closed~\cite{astrosptor} and
flat~\cite{astrosptor1} FLRW backgrounds has been studied. It was shown
that, under certain conditions, the formation of space-time singularities
can be avoided through a non-singular
bounce, which can either be hidden behind a horizon or visible to external
observers, depending on the initial radius or mass of the collapsing
body.

All the previous mentioned cases differ from our setup since we assume
that spins are randomly oriented and the fluid effectively behaves
as dust. However, as was shown above, our conclusions are similar.

\subsection{Senovilla-Vera space-time}

As a special inhomogeneous case of KS-like Szekeres space-times, we study the
Senovilla-Vera space-time \cite{Senovilla} characterized
by a line element of the form 
\begin{equation}
ds^{2}=-d\tau^{2}+dr^{2}+r^{2}d\varphi^{2}+\left(\beta+\frac{\tau^{2}+r^{2}}{\gamma}\right)^{2}dz^{2}\,,\label{eq:Senovilla_line_element}
\end{equation}
where $\gamma\in\mathbb{R}\backslash\left\{ 0\right\} $, $\beta\in\mathbb{R}_{\geq0}$,
$r>0$, $z\in\left]-\infty,+\infty\right[$ and $\varphi\in\left[0,2\pi\right[$. A space-time exterior to \eqref{eq:Senovilla_line_element} was found in \cite{I_Brito}.
Assuming that the space-time is permeated by effective dust, we find
that the effective mass-energy density is given by 
\begin{equation}
8\pi\rho_{\text{eff}}\left(\tau,r\right)=-\frac{4}{\beta\gamma+r^{2}+\tau^{2}}\,.\label{eq:Senovilla_efective_mass_density}
\end{equation}
In the cases where $\gamma>0\wedge\beta\geq0$ or $\gamma<0\wedge\gamma\beta+r^{2}>0$,
it is clear that there are no collapsing solutions. Let us then consider
the cases where $\gamma<0\wedge\gamma\beta+r^{2}<0$. Setting $R=1$,
$E_{0}=1$, $X=\tau^{2}/\gamma$ and $E_{1}=-\left(r^{2}+\beta\gamma\right)/\gamma$
in Eq.~\eqref{eq:Szekeres_II_line_element}, we recover the line element
\eqref{eq:Senovilla_line_element}. The functions $G$ and $H$, in
Eqs.~\eqref{eq:Szekeres_II_energy_density} and \eqref{eq:Szekeres_II_function_W},
are in this case 
\begin{equation}
H\left(\tau,r\right)=\beta+\frac{\tau^{2}+r^{2}}{\gamma}\,,\qquad\qquad G=-\frac{1}{2\pi\gamma}\,.
\end{equation}
Now, for $\gamma<0$ we see that the function $H\left(\tau,r\right)$
is a strictly decreasing function of the coordinate $\tau$. It is
then clear that, for $\gamma\beta+r^{2}<0$, such model is not a solution
of the EFE for effective dust, since this would imply the formation
of a curvature singularity when $\tau=\sqrt{\gamma\beta+r^{2}}$,
violating Proposition \ref{prop:Proposition_2}.

We can see this more clearly by considering
that the fluid is composed of only one
type of fermionic particles. In that case, from Eqs\@.~\eqref{eq:eff_density_final_polytrope}
and \eqref{eq:Senovilla_efective_mass_density} we have that the mass-energy
density of the perfect fluid is given by 
\begin{equation}
\rho\left(\tau,r\right)=\frac{\bar{\rho}}{2}\left[1\pm\sqrt{1+\frac{2}{\bar{\rho}\pi\left(\gamma\beta+\tau^{2}+r^{2}\right)}}\right]\,,
\end{equation}
where $\bar{\rho}_{i}$ is given by Eq.~\eqref{eq:Critical_density_general}
and the discriminant is assumed to be non-negative. From
this relation, we see that in the case of $\gamma<0\wedge\gamma\beta+r^{2}<0$,
when $\tau>\sqrt{\beta|\gamma|-r^{2}-\frac{2}{\bar{\rho}\pi}}$, the
mass-energy density of the fluid will take complex values, that is,
such solution is unphysical.

\subsection{LRS Bianchi type $I$ space-time}

As a final particular case, we will discuss the case of a locally
rotationally symmetric (LRS) Bianchi type I space-time with line element
\begin{equation}
ds^{2}=-d\tau^{2}+A\left(\tau\right)^{2}dx^{2}+B\left(\tau\right)^{2}\left(dy^{2}+dz^{2}\right),\label{eq:Bianchi_line_element}
\end{equation}
which is a spatially homogeneous particular case of the KS-like Szekeres metric.  
Interestingly, this space-time was used to study the influence
of spin in singularity formation by assuming that the space-time is
filled with a dust fluid (not effective dust) with non-null spin,
such that, the spins of the dust particles are aligned along a preferred
direction \cite{Stewart,Kop}.

In line with our previous examples we assume an effective dust fluid and, in that case, from the EFE, we find 
\begin{align}
A\left(\tau\right) & =\frac{c}{\left(b-\tau\right)^{\frac{1}{3}}}+\left(b-\tau\right)^{\frac{2}{3}}\,,\label{eq:Bianchi_A_t_final}\\
B\left(\tau\right) & =\left(b-\tau\right)^{\frac{2}{3}}\,,\label{eq:Bianchi_B_t_final}
\end{align}
with $b$ and $c$ being integration constants, and 
\begin{equation}
8\pi\,\rho_{\text{eff}}\left(\tau\right)=\frac{4}{3\Gamma\left(\tau\right)^{3}}\,,\label{eq:Bianchi_rho_eff_EFE}
\end{equation}
where 
\begin{equation}
\Gamma\left(\tau\right)^{3}=A\left(t\right)B\left(t\right)^{2}=\left(c+b-t\right)\left(b-t\right)\,.
\end{equation}
We note that the case $c=0$ corresponds to the flat FLRW metric in cylindrical
symmetry.

Now, Eqs.~\eqref{eq:Bianchi_line_element}, \eqref{eq:Bianchi_A_t_final}
and \eqref{eq:Bianchi_B_t_final} can be found from Eq.~\eqref{eq:Szekeres_II_line_element}
by setting $E_{0}=E_{1}=1$, $R\left(\tau\right)=B\left(\tau\right)$
and $X\left(\tau\right)=A\left(\tau\right)+B\left(\tau\right)$, hence
\begin{equation}
H\left(\tau\right)=A\left(\tau\right)=\frac{c}{\left(b-\tau\right)^{\frac{1}{3}}}+\left(b-\tau\right)^{\frac{2}{3}}\,,
\end{equation}
and the function $G=1/6\pi$. To apply the results found in the previous
section, at the initial time, for simplicity $\tau=0$, $H\left(0\right)$
must be positive (to match the sign of $G$) and finite, that is
\begin{equation}
\begin{cases}
c>-b & \text{, for \ensuremath{b>0}}\\
c<-b & \text{, for \ensuremath{b<0}}
\end{cases}\,.\label{eq:Bianchi_constraint_parameters}
\end{equation}
Therefore, if the constraints in Eq.~\eqref{eq:Bianchi_constraint_parameters}
are verified, Proposition \ref{prop:Proposition_2} tells us that
a shell-focusing or shell-crossing singularity will not be formed.
As in the previous subsection, this can be readily verified: In the
case $b<0$ this conclusion is trivial, since $A\left(\tau\right)$
and $B\left(\tau\right)$ will never be zero; On the other hand, for
\textbf{$b>0$}, assuming that the fluid is composed of only one type
of fermionic particles, the mass-energy density is given by 
\begin{equation}
\rho\left(\tau\right)=\frac{\bar{\rho}}{2}\left[1\pm\sqrt{1-\frac{2}{3\pi\bar{\rho}\,\Gamma\left(\tau\right)^{3}}}\right]\,,
\end{equation}
so that, as $\tau\to b$ or $\tau\to b+c$ (whichever occurs
first), the mass-energy density will take complex values, and
the resulting solution is unphysical.\textcolor{red}{{} }Let us also remark
that if Eq.~\eqref{eq:Bianchi_constraint_parameters} is not verified
but $\Gamma\left(0\right)\neq0$ then, 
\begin{equation}
\begin{cases}
\lim_{\tau\to b}\Gamma\left(\tau\right)\to-\infty & \text{, for \ensuremath{b>0\,\wedge\,c<-b} }\\
\lim_{\tau\to c+b}\Gamma\left(\tau\right)\to-\infty & \text{, for \ensuremath{b<0\,\wedge\,c>-b} }
\end{cases}\,,
\end{equation}
and all solutions of the EFE are real and a singularity will form.

\subsubsection{The Raychaudhuri equation}

To finish, let us use the LRS Bianchi type I space-time to discuss
a possible source of confusion that might arise when considering the
effects of spin in singularity formation. Consider a congruence of
geodesics in the LRS Bianchi type I space-time whose fiducial curve's
tangent vector is $v=\delta_{\tau}^{\alpha}\partial_{\alpha}$.
The shear scalar, $\sigma^{2}\equiv\sigma_{\alpha\beta}\sigma^{\alpha\beta}$,
where $\sigma_{\alpha\beta}\equiv\frac{1}{2}\left(\nabla_{\alpha}v_{\beta}+\nabla_{\beta}v_{\alpha}\right)$,
is given by 
\begin{equation}
\sigma^{2}=\frac{2}{3}\left(\frac{\dot{A}}{A}-\frac{\dot{B}}{B}\right)^{2}=\frac{2c^{2}}{3\Gamma^{6}}\,,\label{eq:Bianchi_shear_scalar}
\end{equation}
and the expansion scalar, $\theta\equiv\nabla_{\alpha}v^{\alpha}$,
\begin{equation}
\theta=\frac{\dot{A}}{A}+2\frac{\dot{B}}{B}=3\frac{\dot{\Gamma}}{\Gamma}\,.\label{eq:Bianchi_expansion_scalar}
\end{equation}
In this case, the vorticity $\omega_{\alpha\beta}\equiv\frac{1}{2}\left(\nabla_{\alpha}v_{\beta}-\nabla_{\beta}v_{\alpha}\right)$
is identically zero and the Raychaudhuri equation then reads 
\begin{equation}
\dot{\theta}+\frac{1}{3}\theta^{2}+\sigma^{2}+4\pi G\rho_{\text{eff}}=\dot{\theta}+3\left(\frac{\dot{\Gamma}}{\Gamma}\right)^{2}+\frac{2c^{2}}{3\Gamma^{6}}+\frac{2}{3\Gamma^{3}}=0\,.\label{eq:Bianchi_Raychaudhuri}
\end{equation}
Now, this result might seem to contradict Theorem \ref{prop:Proposition_1}
since, in a collapsing setting, it indicates that a singularity, in
the sense of a caustic, will always form. Let us, however, get back
to Eq.~\eqref{eq:Bianchi_rho_eff_EFE}. From Eqs.~\eqref{eq:Bianchi_shear_scalar}
and \eqref{eq:Bianchi_expansion_scalar} we find 
\begin{equation}
8\pi\,\rho_{\text{eff}}=\frac{\theta^{2}}{3}-\frac{\sigma^{2}}{2}\,.\label{eq:eq:Bianchi_rho_eff_alt}
\end{equation}
Substituting Eq.~\eqref{eq:eq:Bianchi_rho_eff_alt} in Eq.~\eqref{eq:Bianchi_Raychaudhuri},
\begin{equation}
\dot{\theta}+\frac{1}{2}\theta^{2}+\frac{3}{4}\sigma^{2}=0\,.\label{eq:Bianchi_ray_modified}
\end{equation}
Let us discuss this result: we see that relation \eqref{eq:Bianchi_ray_modified}
lost all information regarding the presence of spin, that is, this
equation is valid whether we are studying an effective spinning dust
in a Bianchi type $I$ using the Einstein-Cartan framework or a Bianchi
type $I$ space-time permeated by dust with no spin degrees of freedom
in GR. It is then clear that the Raychaudhuri equation alone may not
be enough to infer the possible formation of singularities, in the
sense that, in this case, although mathematically Eq.~\eqref{eq:Bianchi_ray_modified}
does imply the formation of a singularity (in the sense that $\Gamma\to0$),
it does not guarantee that the energy-density of the fluid, $\rho\left(\tau\right)$,
is a real function throughout the evolution of the space-time.

\section{Concluding remarks\label{sec:Conclusion}}

We have considered models of gravitational collapse of inhomogeneous and anisotropic 
(effective) dust fluid on a space-time described
by a Szekeres metric. We have found that, under certain conditions on the initial data,
the formation of a singularity may be avoided due to the presence
of spin. 
Comparing our results with those in the literature for spatially
homogeneous space-times, it was shown that not only the geometry of
the space-time, but also the equation of state of the fluid, play a pivotal role
in the evolution of the space-time and singularity formation. Moreover,
it was shown that even if the effective energy-momentum tensor of
the spinning fluid verifies the weak energy condition, a singularity
can be avoided. 

Some particular cases of the Szekeres model were considered
in order to either extend the previous results or show explicitly
the evolution of the effective dust space-time. The results found
for the various cases are summarized in Table \ref{tab:behavior_particular_solutions}.

Finally, we noted that evaluating the Raychaudhuri equation alone may
not be enough to infer to possible formation of a curvature singularity
in the Einstein-Cartan theory with spin torsion for physical scenarios,
since although a caustic may (mathematically) form, it is not guaranteed
that the quantities that describe the fluid will take real values
throughout the space-time evolution. Indeed, a crucial assumption in our analysis is the fact that the energy density remains real throughout
the evolution which, in our cases, also guarantees that there is no shell crossing.
 
The model of an effective dust fluid represents a critical case, providing
but a first step towards a deeper study of this important question:
In what conditions may spin effects prevent the formation of singularities?
We expect that deviations from the critical case will give rise to
a broader variety of dynamics and outcomes of gravitational collapse,
including the formation of black holes or naked singularities. In
the light of the results in this article, this appears to remain
an interesting open problem.

\begin{table}

\resizebox{17cm}{!}{%
\begin{tabular}{|c|c|>{\centering}m{4cm}|>{\raggedright}m{7cm}|>{\raggedright}m{4.5cm}|}
\hline 
\multicolumn{1}{|c|}{\rule{0pt}{12pt}\textbf{Particular solution}} & \hspace{8pt}\textbf{Parameters}\hspace{8pt} & \multicolumn{2}{c|}{\textbf{Possible behavior}} & \textbf{Other}\tabularnewline
\hline 
\multirow{4}{*}{\rule{0pt}{75pt}LTB} & \multirow{4}{*}{\rule{0pt}{75pt}$\left\{ M\left(r\right),\,f\left(r\right)\right\} $\hspace{3pt}} & \rule{0pt}{15pt}$M\left(r\right)>0\wedge\,f\left(r\right)\geq0$

\rule{0pt}{0pt} & \multirow{2}{7cm}{\rule{0pt}{22pt}The space-time will expand indefinitely.\rule{0pt}{0pt}} & \multirow{4}{4.5cm}{\rule{0pt}{75pt}If $M'\left(r\right)<0$, a shell-crossing singularity
may form.}\tabularnewline
\cline{3-3} 
 &  & \rule{0pt}{15pt}$M\left(r\right)=0\wedge\,f\left(r\right)>0$

\rule{0pt}{0pt} &  & \tabularnewline
\cline{3-4} 
 &  & \multirow{2}{4cm}{\rule{0pt}{22pt}\hspace{13pt}$M\left(r\right)<0\,\wedge\,f\left(r\right)>0$} & \rule{0pt}{20pt}If $\dot{R}\left(\tau_{0},r\right)\geq0$, the space-time will expand indefinitely.

\rule{0pt}{0pt} & \tabularnewline
 &  &  & \rule{0pt}{10pt}If $\dot{R}\left(\tau_{0},r\right)<0$, the space-time will collapse, bounce with $R\left(\tau,r\right)\neq0$ and
expand indefinitely.

\rule{0pt}{0pt} & \tabularnewline
\hline 
\multirow{2}{*}{\rule{0pt}{22pt}Senovilla-Vera} & \multirow{2}{*}{\rule{0pt}{20pt}$\left\{ \beta,\,\gamma\right\} $} & \rule{0pt}{15pt}$\gamma>0\wedge\beta\geq0$

\rule{0pt}{0pt} & \multirow{2}{7cm}{\rule{0pt}{22pt}The space-time will expand indefinitely.} & \tabularnewline
\cline{3-3} 
 &  & \rule{0pt}{15pt}$\gamma<0\wedge\gamma\beta+r^{2}>0$ 

\rule{0pt}{0pt} &  & \tabularnewline
\hline 
\multirow{3}{*}{\rule{0pt}{35pt}LRS Bianchi type I} & \multirow{3}{*}{\rule{0pt}{35pt}$\left\{ b,\,c\right\} $} & \rule{0pt}{15pt}$b<0\,\wedge\,c+b<0$

\rule{0pt}{0pt} & \rule{0pt}{15pt}The space-time will expand indefinitely.

\rule{0pt}{0pt} & \tabularnewline
\cline{3-5} 
 &  & \rule{0pt}{15pt}$b>0\,\wedge\,c+b<0$

\rule{0pt}{0pt} & \multirow{2}{7cm}{\rule{0pt}{22pt}A singularity will form in finite time.} & \multirow{2}{4.5cm}{}\tabularnewline
\cline{3-3} 
 &  & \rule{0pt}{15pt}$b<0\,\wedge\,c+b>0$

\rule{0pt}{0pt} &  & \tabularnewline
\hline 
\end{tabular}}

\caption{\label{tab:behavior_particular_solutions}Evolution of an effective
dust matter on various space-times, particular solutions of
the Szekeres metric. The space-times assumed to verify the premises
of Theorem \ref{prop:Proposition_1} or Proposition \ref{prop:Proposition_2}. All unmentioned possible initial data either lead to unphysical
solutions or correspond to the trivial static case.}
\end{table}

\section*{Acknowledgments}

AHZ is grateful to D. Malafarina and R. Goswami for useful discussions
and to F. W. Hehl for helpful correspondence. PL thanks IDPASC and
FCT-Portugal for financial support through Grant No. PD/BD/114074/2015.
FCM was supported by Portuguese Funds through FCT - Fundação para
a Ciência e Tecnologia, within the Projects UID/MAT/00013/2013 and
PTDC/MAT-ANA/1275/2014 as well as grant SFRH/BSAB/130242/2017. This work has
been supported financially by Research Institute for Astronomy \& Astrophysics of Maragha
(RIAAM) under research project No. 1/5750-61.

\appendix

\section{$\lim_{\tau\to\tau_{c}\left(r\right)}R{}^{2}R'=0$\label{sec:Appendix-A}}

In the proof of Lemma \ref{Lemma:Lemma_1} we have used the result
that, when $\tau_{c}\left(r\right)$ exists, then $\lim_{\tau\to\tau_{c}\left(r\right)}R{}^{2}R'=0$.
Since this result plays a key role in that proof, here we summarize
the solutions of the generalized Friedmann equation, Eq.~\eqref{eq:Szekeres_I_Friedmann}
(also Eq.~\eqref{eq:Szekeres_II_Friedmann}) and show how they can
be used in the proof.

The solutions to \eqref{eq:Szekeres_I_Friedmann} can be separated
in various sub-families depending on the combination of the signs
of the functions $M\left(r\right)$ and $f\left(r\right)$. It can
be easily seen that not all solutions are real (e.g. the case of $M\left(r\right),\,f\left(r\right)<0$).
The solutions of interest are then \cite{Szekeres1}

\begin{align}
R\left(\tau,r\right)=-2\frac{M\left(r\right)}{f\left(r\right)}\cos^{2}\left(\eta\right) & ,\,\eta+\frac{1}{2}\sin\left(2\eta\right)=\frac{\sqrt{-f\left(r\right)^{3}}}{2M\left(r\right)}\left(\tau-\tau_{0}\left(r\right)\right) & \text{, for }f\left(r\right)<0\text{ and }M\left(r\right)>0\label{Appendix_eq:Friedman_solution_1}\\
R\left(\tau,r\right)=-2\frac{M\left(r\right)}{f\left(r\right)}\cosh^{2}\left(\eta\right) & ,\,\eta+\frac{1}{2}\sinh\left(2\eta\right)=\frac{f\left(r\right)^{\frac{3}{2}}}{-2M\left(r\right)}\left(\tau-\tau_{0}\left(r\right)\right) & \text{, for }f\left(r\right)>0\text{ and }M\left(r\right)<0\label{Appendix_eq:Friedman_solution_2}\\
R\left(\tau,r\right)=2\frac{M\left(r\right)}{f\left(r\right)}\sinh^{2}\left(\eta\right) & ,\,\frac{1}{2}\sinh\left(2\eta\right)-\eta=\frac{f\left(r\right)^{\frac{3}{2}}}{2M\left(r\right)}\left(\tau-\tau_{c}\left(r\right)\right) & \text{, for }f\left(r\right)>0\text{ and }M\left(r\right)>0\label{Appendix_eq:Friedman_solution_3}\\
 & R\left(\tau,r\right)=\sqrt{f\left(r\right)}\left(\tau-\tau_{c}\left(r\right)\right) & \text{, for }f\left(r\right)>0\text{ and }M\left(r\right)=0\label{Appendix_eq:Friedman_solution_4}\\
 & R\left(\tau,r\right)=\left(\frac{9M\left(r\right)}{2}\right)^{\frac{1}{3}}\left(\tau-\tau_{c}\left(r\right)\right)^{\frac{2}{3}} & \text{, for }f\left(r\right)=0\text{ and }M\left(r\right)>0\label{Appendix_eq:Friedman_solution_5}\\
 & R\left(\tau,r\right)=\text{const.} & \text{, for }f\left(r\right)=0\text{ and }M\left(r\right)=0\label{Appendix_eq:Friedman_solution_6}
\end{align}
where some functional dependencies were omitted to simplify the notation.

Now, for the cases represented by Eqs.~\eqref{Appendix_eq:Friedman_solution_4}
- \eqref{Appendix_eq:Friedman_solution_5}, the computation of $\lim_{\tau\to\tau_{c}\left(r\right)}R{}^{2}R'$
can be done directly and verified to be zero.

Let us consider the case where $f\left(r\right)<0$ and $M\left(r\right)>0$,
Eq.~\eqref{Appendix_eq:Friedman_solution_1}. Taking the derivative
with respect to $r$ of the parametric equation for $\eta\left(\tau,r\right)$
we find 
\begin{equation}
\frac{\partial\eta}{\partial r}=\frac{1}{1+\cos\left(2\eta\right)}\frac{\partial}{\partial r}\left(\frac{\sqrt{-f\left(r\right)^{3}}}{2M\left(r\right)}\left(\tau-\tau_{0}\left(r\right)\right)\right)\,.\label{Appendix_eq:deta_dr}
\end{equation}
On the other hand, 
\begin{equation}
R'=-\frac{\partial}{\partial r}\left(\frac{2M\left(r\right)}{f\left(r\right)}\right)\cos^{2}\left(\eta\right)+4\frac{M\left(r\right)}{f\left(r\right)}\cos\left(\eta\right)\sin\left(\eta\right)\frac{\partial\eta}{\partial r}\,.\label{Appendix_eq:R_prime}
\end{equation}
Then, using Eqs.~\eqref{Appendix_eq:Friedman_solution_1}, \eqref{Appendix_eq:deta_dr}
and \eqref{Appendix_eq:R_prime}, we find 
\begin{equation}
\label{limit}
\begin{aligned}\lim_{\tau\to\tau_{c}\left(r\right)}R{}^{2}R' & =\lim_{\eta\to\frac{\pi}{2}}\left(\frac{2M\left(r\right)}{f\left(r\right)}\right)^{2}\cos^{4}\left(\eta\right)\left[4\frac{M\left(r\right)}{f\left(r\right)}\frac{\partial}{\partial r}\left(\frac{\sqrt{-f\left(r\right)^{3}}}{2M\left(r\right)}\left(\tau-\tau_{0}\left(r\right)\right)\right)\frac{\cos\left(\eta\right)\sin\left(\eta\right)}{1+\cos\left(2\eta\right)}-\frac{\partial}{\partial r}\left(\frac{2M\left(r\right)}{f\left(r\right)}\right)\cos^{2}\left(\eta\right)\right]\\
 & =-8\left(\frac{M\left(r\right)}{f\left(r\right)}\right)^{3}\lim_{\eta\to\frac{\pi}{2}}\,\cos^{3}\left(\eta\right)\sin\left(\eta\right)\,\frac{\partial}{\partial r}\left(\frac{\sqrt{-f\left(r\right)^{3}}}{2M\left(r\right)}\left(\tau-\tau_{0}\left(r\right)\right)\right)\\
 & =0\,,
\end{aligned}
\end{equation}
if the limit exists.

For the cases of $f\left(r\right)>0$ and $M\left(r\right)<0$, and
$f\left(r\right)=M\left(r\right)=0$ we see, from Eqs.~\eqref{Appendix_eq:Friedman_solution_2}
and \eqref{Appendix_eq:Friedman_solution_6}, that the function $R\left(\tau,r\right)$
is never zero, hence $\tau_{c}\left(r\right)$ does not exist.

In the case of $f\left(r\right)>0$ and $M\left(r\right)>0$, Eq.~\eqref{Appendix_eq:Friedman_solution_3},
the considered limit can be computed similarly to the case of $f\left(r\right)<0$
and $M\left(r\right)>0$ and, as in \eqref{limit}, the result is also zero.

\end{document}